\newtheorem{assumption}[theorem]{Assumption}
\newcommand{\ol}{\overline} 
\newcommand{\olC}{\overline{C}} 
\newcommand{\poim}{\textrm{POIM} } 
\newcommand{\fpoim}{\mathit{PoIm} } 
\newcommand{\rel}{\mathit{Rel}} 
\newcommand{\gr}{\mathit{Gr}} 
\newcommand{\low}{\mathit{low}} 
\newcommand{\high}{\mathit{high}} 
\newcommand{\Match}{\mathcal{M}\!\mathit{atch}} 
\newcommand{\rhoP}{P} 
\newcommand{\Tr}{\mathit{Tr}} 
\newcommand{\iri}{\mathit{Iri}}
\newcommand{\lit}{\mathit{Lit}}  
\newcommand{\V}{V}
\newcommand{\B}{B}
\newcommand{\I}{I}  
\newcommand{\IB}{{IB}}
\newcommand{\IBV}{{IBV}} 
\newcommand{\IV}{{IV}}
\newcommand{\tgr}[2]{\mathcal{G}_{#1}(#2)} 
\newcommand{\DGr}[1]{\mathcal{D}_{#1}} 
\newcommand{\QGr}[1]{\mathcal{Q}_{#1}} 
\newcommand{\parto}{\xymatrix@C=1pc{\ar@{^>}[r] &}} 
\newcommand{\upto}[1]{\stackrel{#1}{\to}} 
\newcommand{\from}{\leftarrow} 
\newcommand{\upfrom}[1]{\stackrel{#1}{\from}}
\newcommand{\empw}[1]{#1}
\newcommand{\doc}[1]{\textit{#1}} 
\newcommand{\blank}[1]{$\_\colon\!$#1}
\newcommand{\hide}[1]{}
\newcommand{\masque}[1]{}
\begin{document}

\title{On foundational aspects of RDF and SPARQL \\
       (Revised Version)} 


\author{Dominique Duval, Rachid
  Echahed and Fr{\'e}d{\'e}ric Prost}
\institute{
 CNRS and Univ. Grenoble Alpes, Grenoble, France
}

\maketitle


\begin{abstract}
  We consider the recommendations of the World Wide Web Consortium
  (W3C) about RDF framework and its associated query language
  SPARQL. We propose a new formal framework based on category theory
  which provides clear and concise formal definitions of the main
  basic features of RDF and SPARQL. We define RDF graphs as well as
  SPARQL basic graph patterns as objects of some nested
  categories. This allows one to clarify, in particular, the role of
  blank nodes.  Furthermore, we consider basic SPARQL CONSTRUCT and
  SELECT queries and formalize their operational semantics following a
  novel algebraic graph transformation approach called POIM.
\end{abstract}


\section{Introduction}

Graph databases are becoming a very influential technology in our
society. Mastering the languages involved in the
encoding of data or the formulation of queries is a necessity to
elaborate robust data management systems.

In this paper, we consider the most recent recommendations of the
World Wide Web Consortium (W3C) about the Resource Description
Framework (RDF) \cite{rdf} and the associated query language SPARQL
\cite{sparql} and propose a mathematical semantics of a kernel of
these formalisms.

The key data structure in RDF is the structure of \empw{RDF graph}.
In \cite[Section~3]{rdf}, an RDF graph is defined as a set of \empw{RDF
triples}, where an RDF triple has the form
$(\empw{subject},\empw{predicate},\empw{object})$. The
\empw{subject} is either an \empw{IRI} (Internationalized Resource
Identifier) or a \empw{blank node}, the \empw{predicate} is an IRI and
the \empw{object} is either an IRI, a literal (denoting a value such as
a string, a number or a date) or a  \empw{blank node}. 
Blank nodes are arbitrary elements as long as they differ from IRIs
and literals and they do not have any internal structure: they are
used for indicating the existence of a thing and the \empw{blank node
  identifiers} are \empw{locally scoped}.
For instance, the triples $(Paul, knows, blank1)$ and
$(blank2, knows, Henry)$ mean, respectively, that Paul knows someone and
someone knows Henry. Surprisingly, a triple such as $(Paul, blank3,
Henry)$ standing for ``there is some relationship between Paul and Henry''
is not allowed in RDF, but only in generalized RDF
\cite[Section~7]{rdf}. Following the theoretical point of view we
propose in this paper, there is no harm to consider blank
predicates within RDF triples. We thus consider \emph{data graphs} in a more
general setting including RDF graphs.


The query language SPARQL for RDF databases is based on \empw{basic
  graph patterns}, which are kinds of RDF graphs with variables
\cite[Section~2]{sparql}. In this paper, we consider \emph{query
  graphs} which generalize basic graph patterns by allowing blanks to
be predicates.  The SPARQL query processor searches for triples within
a given RDF database which \empw{match} the triple patterns
in the given basic graph pattern, and returns a multiset of solutions
or an RDF graph. 
Considering basic graph patterns, one may wonder what is the
difference between variables and blank nodes. SPARQL
specifications in \cite[Section~4.1.4]{sparql}
suggest similarities between them,  
whereas the opposite is made in  \cite[Section~16.2]{sparql}.
In the formalization of 
SPARQL we propose, blank nodes
and variables are clearly distinguished by their respective roles in
the definition of morphisms.

In the SPARQL recommendation \cite{sparql}, the SELECT query form is described
lengthily. This query form can be compared to the SELECT
query form of SQL, which returns a multiset of solutions. 
In contrast, the CONSTRUCT query form  returns an RDF graph.
The latter is described very shortly in \cite[Section~16.2]{sparql}.
Following our formalization, the CONSTRUCT  query form is  more
fundamental than the SELECT query form. Actually, we start by  proposing an
operational semantics for CONSTRUCT queries based on a new approach of
algebraic graph transformations which we call \poim and we show afterward how
SELECT queries can be easily encoded as CONSTRUCT queries.

The paper is organized as follows. Section~\ref{sec:graph} defines the
objects and the morphisms of the categories of data graphs and query
graphs. Section~\ref{sec:poim} introduces the \poim algebraic
transformation: a rewrite rule is a cospan $L \rightarrow K \leftarrow R$
where $L,K$ and $R$ are basic graph patterns, and a rewrite step is made of
a pushout followed by an image factorization. 
Afterward, in Section~\ref{sec:construct} we define two different
operational semantics for CONSTRUCT queries and prove their equivalence.
We first define a \emph{high-level calculus} as a mere application of the \poim
transformation. Then we propose a \emph{low-level calculus} which is defined by
means of several applications of the \poim transformation followed by
a ``merging'' process. Both calculi implement faithfully the SPARQL
semantics for CONSTRUCT queries (Theorem~\ref{theo:construct-sparql}).
In Section~\ref{sec:select}, we show how the \poim transformation can
be used to define a novel operational semantics of the SELECT queries.
This semantics, which is faithful to SPARQL definitions
(Theorem~\ref{theo:select-sparql}), is obtained by an original
translation of each SELECT query into a CONSTRUCT query.
Concluding remarks and related work are discussed in Section~\ref{sec:discuss}.
The missing proofs are in the Appendix.

\section{Graphs of triples} 
\label{sec:graph}

The set of \empw{IRIs}, denoted $\iri$, and the set of
\empw{literals}, 
denoted $\lit$, with its usual operations, 
are defined in \cite{rdf}. 
The sets $\iri$ and $\lit$ are disjoint. 
In addition, let $\B$ be a countably infinite set,
disjoint from $\iri$ and $\lit$. 
The elements of $\B$ are called the \emph{blanks}.
According to \cite[Section~3.1]{rdf}, 
\doc{an \empw{RDF graph} is a set of RDF triples 
and an \empw{RDF triple} consists of three components:
the \empw{subject}, which is an IRI or a blank node; 
the \empw{predicate}, which is an IRI; 
and the \empw{object}, which is an IRI, a literal or a blank node. 
The set of \empw{nodes} of an RDF graph is the set of subjects and objects
of triples in the graph.}
Using set-theoretic notations, this can be expressed as follows: 
let $\Tr= (\iri\cup\B) \times \iri \times (\iri\cup\lit\cup\B)$, 
then an RDF triple is an element of $\Tr$ and an RDF graph is a subset of $\Tr$.
Let us also consider the following extension of RDF \cite[Section~7]{rdf}:
\doc{
  A \empw{generalized RDF triple} is a triple having a subject, a predicate,
  and object, where each can be an IRI, a blank node or a literal.
  A \empw{generalized RDF graph} is a set of generalized RDF triples.
}
Let $\I=\iri\cup\lit$, so that 
a generalized RDF triple is an element of $(\I\cup\B)^3$
and a generalized RDF graph is a subset of $(\I\cup\B)^3$.

Let $\V$ be a countably infinite set disjoint from $\iri$, $\lit$ and $\B$. 
The elements of $\V$ are called the \emph{variables}.
According to \cite[Section~2]{sparql}
\doc{a set of triple patterns is called a basic graph pattern.
Triple patterns are like RDF triples except that each of the subject,
predicate and object may be a variable}.
Let $\Tr_V= (\iri\cup\B\cup\V) \times (\iri\cup\V) \times
(\iri\cup\lit\cup\B\cup\V)$, 
then a triple pattern is an element of $\Tr_V$
and a basic graph pattern is a subset of $\Tr_V$.
Since $\Tr_V$ is a subset of $(\I\cup\B\cup\V)^3$,
each basic graph pattern is a subset of $(\I\cup\B\cup\V)^3$.

RDF graphs and basic graph patterns are generalized 
in Definition~\ref{def:graph-data-query} as \emph{data graphs} and
\emph{query graphs} respectively,
both relying on Definition~\ref{def:graph-graph}.  

\begin{definition}
\label{def:graph-graph}
For each set $A$, the \emph{triples on $A$} are the elements of $A^3$. 
For each triple $t=(s,p,o)$ on $A$ 
the elements $s$, $p$ and $o$ of $A$ are called respectively
the \emph{subject}, the \emph{predicate} and the \emph{object} of $t$.
A \emph{graph on $A$} is a set of triples on $A$, i.e. a subset of $A^3$. 
For each graph $T$ on $A$, the subset of $A$ made of the
subjects, predicates and objects of $T$ is called the set of \emph{attributes}
of $T$ and is denoted $|T|$; it follows that $T$ is a subset of $|T|^3$. 
Let $T$ and $T'$ be two graphs on $A$.
A \emph{morphism} $a:T\to T'$ is a map such that there is a map $M:|T|\to |T'|$
such that $a$ is the restriction of $M^3$ to $T$. 
Then $M$ is uniquely determined by $a$, it is denoted $|a|$. 
This yields the \emph{category of graphs on $A$}, denoted
$\tgr{}{A}$. 
We say that a morphism $a:T\to T'$ of graphs on $A$ \emph{fixes} a subset $C$ of $A$ 
if $|a|(x)=x$ for each $x$ in $|T| \cap C$.
For each subset $C$ of $A$, the subcategory of $\tgr{}{A}$ made of
the graphs on $A$ with the morphisms fixing $C$ is denoted $\tgr{C}{A}$.
\end{definition}

Thus, by mapping $a$ to $|a|$ we get a one-to-one correspondence
between the morphisms $a:T\to T'$ of graphs on $A$ 
and the maps $M:|T|\to |T'|$ such that $M^3(T)\subseteq T'$.
An isomorphism (i.e., an invertible morphism) in $\tgr{}{A}$ 
is a morphism $a:T\to T'$ of graphs on $A$ such that $|a|:|T|\to |T'|$
is a bijection and $a(T)=T'$. 
A morphism $a$ fixing $C$ is determined by the restriction
of the map $|a|$ to $|T|\cap \ol{C}$, where $\ol{C} = A\setminus C$.
An isomorphism $a$ in $\tgr{C}{A}$
is a morphism $a:T\to T'$ of graphs on $A$ such that
$|a|$ is the identity on $|T| \cap C$
and a bijection between $|T|\cap \ol{C}$ and $|T'|\cap \ol{C}$ 
and $a(T)=T'$. 
The notions of \emph{inclusion}, \emph{subgraph}, \emph{image} and \emph{union}
for graphs on $A$ are 
defined as inclusion, subset, image and union for subsets of $A^3$.

\begin{definition}
\label{def:graph-data-query}
Let $\I$, $\B$ and $\V$ be three pairwise distinct countably infinite sets,
called respectively the sets of \emph{resource identifiers},
\emph{blanks} and \emph{variables}. 
Let $\IB=\I\cup\B$, $\IV=\I\cup\V$ and $\IBV=\I\cup\B\cup\V$.
The \emph{category of data graphs} is $\DGr{}=\tgr{}{\IB}$ and 
for each subset $C$ of $\IB$ the category of data graphs \emph{fixing} $C$
is the subcategory $\DGr{C}=\tgr{C}{\IB}$ of $\DGr{}$. 
The \emph{category of query graphs} is $\QGr{}=\tgr{}{\IBV}$ and 
for each subset $C$ of $\IBV$ the category of query graphs \emph{fixing} $C$
is the subcategory $\QGr{C}=\tgr{C}{\IBV}$ of $\QGr{}$. 
\end{definition}

Thus, when $\I=\iri\cup\lit$, 
the RDF graphs are the data graphs where 
only nodes can be blanks and only nodes that are not subjects can be literals,
and the \empw{RDF terms} of an RDF graph are its attributes
when it is seen as a data graph. 
Then the \empw{isomorphisms of RDF graphs}, as defined in
\cite[Section~3.6.]{rdf}, are the isomorphisms 
in the category $\DGr{\I}$ of data graphs fixing $\I$: 
indeed, two data graphs $G_1$ and $G_2$ are isomorphic in $\DGr{\I}$
if and only if they differ only by the names of their blanks. 
For each data graph $T$, let $|T|_\I=|T|\cap\I$ and $|T|_\B=|T|\cap\B$,
so that $|T|$ is the disjoint union of $|T|_\I$ and $|T|_\B$.
Similarly, the basic graph patterns of SPARQL are the query graphs where 
only nodes can be blanks and only nodes that are not subjects can be literals.
For each query graph $T$, let $|T|_\I=|T|\cap\I$, $|T|_\B=|T|\cap\B$ and
$|T|_\V=|T|\cap\V$,
so that $|T|$ is the disjoint union of $|T|_\I$, $|T|_\B$ and $|T|_\V$.

Morphisms of graphs can be used, for instance, 
for substituting the variables of a query graph 
(Definition~\ref{def:graph-match})
or for interpreting a data graph in a universe of discourse
(Definition~\ref{def:rdf-interpretation}).

\begin{definition}
\label{def:graph-match}
A \emph{match} from a query graph $L$ to a data graph $G$ 
is a morphism of query graphs from $L$ to $G$ which fixes $\I$. 
The set of matches from $L$ to $G$ is denoted $\Match(L,G)$
and the set of all matches from $L$ to any data graph is denoted $\Match(L)$.
\end{definition}

Thus, a match fixes each resource identifier   
and it maps a variable or a blank to a resource identifier or a blank.

The interpretations of an RDF graph are also kinds of morphisms, 
see Definition~\ref{def:rdf-interpretation}. 
Note that this will not be used later in this paper. 
We define an interpretation of
a data graph $G$ in a universe of discourse $U$ by generalizing
the definition of a morphism, according to \cite[Section~1.2.]{rdf}: 
\doc{
Any IRI or literal denotes something in the world 
(the ``\empw{universe of discourse}'').
These things are called \empw{resources}. 
The predicate itself is an IRI and denotes a \empw{property}, that is,
a resource that can be thought of as a binary relation.
}
Recall that the binary relations on a set $R$ are the subsets of $R^2$.
It can happen that a binary relation on $R$ is itself an element of $R$. 

\begin{definition}
\label{def:rdf-interpretation}
Given a set $R$ and a subset $P$ of $R^2$ made of binary relations on $R$, 
let $U$ be the set of triples $(s,p,o)$ in $R^3$ such that $p\in P$
and $(s,o)\in p$.
The \emph{universe of discourse} with $R$ as set of \emph{resources}
and $P$ as set of \emph{properties} is the graph $U$ on $R$.  
Given a universe of discourse $U$ on a set $R$ and a map $M_\I:\I \to R$, 
an \emph{interpretation} of a data graph $G$ is a map $i:G\to U$
such that $i=M^3$ for a map $M:|G|\to |U|$ which extends $M_\I$.
\end{definition}

In this paper, we consider categories $\DGr{C}$ and $\QGr{C}$ for various
subsets $C$ of $\IB$ and $\IBV$ respectively.
It will always be the case that $C$ contains $\I$,
so that we can say that resource identifiers have a ``global scope''. 
In contrast, blanks have a ``local scope'': 
in the basic part of RDF and SPARQL considered in this paper,
the scope of a blank node is restricted to one data graph or one query graph.
The note about \emph{blank node identifiers} in \cite[Section~3.4]{rdf} distinguishes
two kinds of syntaxes for RDF:
an abstract syntax where blank nodes do not have identifiers
and concrete syntaxes where blank nodes have identifiers.
In our approach a blank \emph{is} an attribute, which corresponds to a
concrete syntax, 
and the abstract syntax is obtained by considering data graphs
as objects of the category $\DGr{\I}$ up to isomorphism,
so that any blank node can be changed for a new blank node if needed.

\begin{example}
  \label{ex::graph-categories}
In all examples we use the following prefixes
(\texttt{@prefix} for data and \texttt{PREFIX} for queries): 

\begin{Verbatim}[frame=single,label=Prefixes,fontsize=\scriptsize]
@prefix  foaf:  <http://xmlns.com/foaf/0.1/>.
PREFIX foaf:    <http://xmlns.com/foaf/0.1/>
PREFIX vcard:   <http://www.w3.org/2001/vcard-rdf/3.0#>
\end{Verbatim}

\noindent
Consider two RDF graphs $G_1$, $G_2$ as follows.
They are isomorphic in $\DGr{\I}$
but not in $\DGr{\IB}$ because blanks are swapped.  

\medskip 
\noindent
\begin{minipage}{2.2in}
\begin{Verbatim}[frame=single,label=$G_1$,fontsize=\scriptsize]
<http://example.org/Al> foaf:knows _:b. 
_:c foaf:knows <http://example.org/Bob>. 
\end{Verbatim}
	\end{minipage} \hfill
	\begin{minipage}{2.2in}
\begin{Verbatim}[frame=single,label=$G_2$,fontsize=\scriptsize]
<http://example.org/Al> foaf:knows _:c. 
_:b foaf:knows <http://example.org/Bob>. 
\end{Verbatim}
	\end{minipage}

\medskip 
\noindent
Now consider basic graph patterns $G_3$ to $G_8$.
They are pairwise non-isomorphic in $\QGr{\IBV}$ because they are pairwise
distinct.
In $\QGr{\IV}$ only $G_7$ and $G_8$ are isomorphic. 
In $\QGr{\I}$ these query graphs belong to two different isomorphism classes:
on one side $G_3$ and $G_4$ are isomorphic
and on the other side $G_5$, $G_6$, $G_7$ and $G_8$ are isomorphic.

\medskip 
\noindent
	\begin{minipage}{2.2in}
\begin{Verbatim}[frame=single,label=$G_3$,fontsize=\scriptsize]
<http://example.org/Al> foaf:knows _:b. 
_:b foaf:knows <http://example.org/Bob>. 
\end{Verbatim}
	\end{minipage} \hfill
	\begin{minipage}{2.2in}
\begin{Verbatim}[frame=single,label=$G_4$,fontsize=\scriptsize]
<http://example.org/Al> foaf:knows ?x. 
?x foaf:knows <http://example.org/Bob>. 
\end{Verbatim}
	\end{minipage}

\medskip	
\noindent
	\begin{minipage}{2.2in}
\begin{Verbatim}[frame=single,label=$G_5$,fontsize=\scriptsize]
<http://example.org/Al> foaf:knows _:b. 
_:c foaf:knows <http://example.org/Bob>. 
\end{Verbatim}
	\end{minipage} \hfill
	\begin{minipage}{2.2in}
\begin{Verbatim}[frame=single,label=$G_6$,fontsize=\scriptsize]
<http://example.org/Al> foaf:knows ?x. 
?y foaf:knows <http://example.org/Bob>. 
\end{Verbatim}
	\end{minipage}

\medskip	
\noindent
	\begin{minipage}{2.2in}
\begin{Verbatim}[frame=single,label=$G_7$,fontsize=\scriptsize]
<http://example.org/Al> foaf:knows ?x. 
_:b foaf:knows <http://example.org/Bob>. 
\end{Verbatim}
	\end{minipage} \hfill
	\begin{minipage}{2.2in}
\begin{Verbatim}[frame=single,label=$G_8$,fontsize=\scriptsize]
<http://example.org/Al> foaf:knows ?x. 
_:c foaf:knows <http://example.org/Bob>. 
\end{Verbatim}
	\end{minipage}

\end{example}	

\begin{assumption}
\label{ass:graph}
   {F}rom now on $A$ is a set, $C$ is a subset of $A$,
    $\olC=A\setminus C$ is the complement of $C$ in $A$,
    and it is assumed that both $C$ and $\olC$ are countably infinite. 
\end{assumption}

\begin{remark}
\label{rem:graph-iso}
Since $\olC$ is countably infinite, when dealing with a finite number of finite graphs
on $A$ it is always possible to find a \emph{new attribute outside $C$},
i.e., an element of $\olC$ that is not an attribute of any of the given graphs.
We will use repeatedly the following consequence of this fact:
\smallskip 
\\ \textsl{Given a graph $T$ on $A$, if any attribute of $T$ in $\olC$
is replaced by any new element of $\olC$ the result is 
a graph $T'$ on $A$ that is isomorphic to $T$ in $\tgr{C}{A}$.
Such a $T'$ exists when $T$ is finite.}
\end{remark}

Now let us focus on some kinds of colimits of graphs on $A$:
coproducts in Proposition~\ref{prop:graph-coprod} and 
pushouts in Proposition~\ref{prop:graph-po}.
Recall that colimits in any category are defined up to isomorphism
in this category. 
  
\begin{proposition}  
\label{prop:graph-coprod}
Given graphs $T_1,...,T_k$ on $A$
such that $|T_i|\cap|T_j|\subseteq C$ for each $i\ne j$,
the union $T_1 \cup ...\cup T_k$ is a coproduct of $T_1,...,T_k$
in $\tgr{C}{A}$.
\end{proposition}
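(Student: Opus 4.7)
The plan is to verify the universal property of the coproduct directly, using the hypothesis $|T_i|\cap|T_j|\subseteq C$ to guarantee that compatible mediating morphisms always glue uniquely. Let $T=T_1\cup\cdots\cup T_k$ and let $\iota_i:T_i\to T$ be the inclusions, viewed as morphisms in $\tgr{C}{A}$ via the inclusion of attribute sets $|T_i|\hookrightarrow|T|$ (this is the identity on $|T_i|\cap C$, hence fixes $C$, and its cube obviously sends $T_i$ into $T$).

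Now take an arbitrary cocone: graph $T'$ on $A$ and morphisms $f_i:T_i\to T'$ in $\tgr{C}{A}$, determined by maps $|f_i|:|T_i|\to|T'|$ fixing $C$. The first and only substantive step is to assemble a single map $M:|T|\to|T'|$ with $M\big|_{|T_i|}=|f_i|$. Since $|T|=|T_1|\cup\cdots\cup|T_k|$, the question is compatibility on overlaps: for $x\in|T_i|\cap|T_j|$ with $i\ne j$ the hypothesis forces $x\in C$, and then $|f_i|(x)=x=|f_j|(x)$ because both $|f_i|$ and $|f_j|$ fix $C$. Hence $M$ is well defined, and it fixes $C$ because each $|f_i|$ does.

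It remains to check that $M$ induces a morphism $f:T\to T'$ in $\tgr{C}{A}$ with $f\circ\iota_i=f_i$. For any triple $t\in T$ there is some $i$ with $t\in T_i$, and then $M^3(t)=|f_i|^3(t)=f_i(t)\in T'$; so $M^3(T)\subseteq T'$, which by the one-to-one correspondence stated after Definition~\ref{def:graph-graph} yields the desired $f$ with $|f|=M$. The commutation $f\circ\iota_i=f_i$ is immediate from $M\big|_{|T_i|}=|f_i|$.

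Uniqueness of $f$ is forced: any mediating morphism $g:T\to T'$ satisfying $g\circ\iota_i=f_i$ must have $|g|\big|_{|T_i|}=|f_i|$ for every $i$, and since $|T|=\bigcup_i|T_i|$ this determines $|g|=M=|f|$, hence $g=f$. The only nontrivial ingredient is the overlap argument; everything else reduces to unwinding Definition~\ref{def:graph-graph}, and I expect no real obstacle.
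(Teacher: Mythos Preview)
Your proof is correct and follows essentially the same approach as the paper's: both construct the mediating map on attributes by gluing the $|f_i|$, using the hypothesis $|T_i|\cap|T_j|\subseteq C$ (together with the fact that all morphisms fix $C$) to ensure well-definedness, and then check that $M^3$ lands in $T'$. The only cosmetic difference is that the paper phrases well-definedness via the partition $|T|=\bigl(\bigcup_i(|T_i|\setminus C)\bigr)\sqcup(|T|\cap C)$, whereas you check overlap compatibility directly.
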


By Remark~\ref{rem:graph-iso} it follows that
if $T_1,...,T_k$ are any finite graphs on $A$ there are graphs $T'_1,...,T'_k$
on $A$ such that $T'_i$ is isomorphic to $T_i$ in $\tgr{C}{A}$ for each $i$ 
and $|T'_i|\cap|T'_j|\subseteq C$ for each $i\ne j$,
so that the union $T'_1 \cup ...\cup T'_k$ is a coproduct of $T_1,...,T_k$
in $\tgr{C}{A}$.

\begin{proposition}  
\label{prop:graph-po}
Let $l:L\to K$ and $m:L\to G$ be morphisms of graphs on $A$
such that $K$ is finite, $l$ is an inclusion and $m$ fixes $C$. 
Let us assume that $|K| \cap |G| \subseteq C$
(this is always possible up to isomorphism in $\tgr{C}{A}$,
by Remark~\ref{rem:graph-iso}).
Let $N:|K|\to |G|\cup|K\setminus L|$ be such that $N(x)=|m|(x)$ for $x\in |L|$ 
and $N(x)=x$ otherwise. 
Let $D=G\cup N^3(K)$, let $n:K\to D$ be the restriction of $N^3$
and $g:G\to D$ the inclusion.
Then $|D|=|G|\cup|K\setminus L|$ and
the square $(l,m,n,g)$ is a pushout square in $\tgr{C}{A}$.
\end{proposition}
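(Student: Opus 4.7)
The plan is to verify that the square is well-defined in $\tgr{C}{A}$, that it commutes, and then to establish the universal property by gluing a mediating morphism from the two pieces $|G|$ and $|K|\setminus|L|$ of $|D|$.

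First I would observe that $N$ is well-defined on $|K|=|L|\cup(|K|\setminus|L|)$: on $|L|$ it equals $|m|$ (landing in $|G|$), on $|K|\setminus|L|$ it is the identity. The identity $|N^3(K)|=N(|K|)$ then yields $|D|=|G|\cup(|K|\setminus|L|)$, and since $l$ is an inclusion one checks that this coincides with $|G|\cup|K\setminus L|$, the elements of $|L|\cap|K\setminus L|$ being absorbed into $|G|$ via $|m|$ on the $N^3(K)$ side. To see that $n$ fixes $C$, for $x\in|K|\cap C$ either $x\in|L|$ and $N(x)=|m|(x)=x$ since $m$ fixes $C$, or $x\in|K|\setminus|L|$ and $N(x)=x$ by construction; the inclusion $g$ trivially fixes $C$. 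Commutativity $n\circ l=g\circ m$ is then immediate triple by triple, using that $l$ is the inclusion and that $N$ agrees with $|m|$ on $|L|$.

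For the universal property, given morphisms $n':K\to T'$ and $g':G\to T'$ in $\tgr{C}{A}$ with $n'\circ l=g'\circ m$, I would define a map $U:|D|\to|T'|$ piecewise by $U|_{|G|}=|g'|$ and $U|_{|K|\setminus|L|}=|n'|$, and take $u$ to be the restriction of $U^3$ to $D$. Well-definedness of $U$ on the overlap uses $|G|\cap(|K|\setminus|L|)\subseteq|K|\cap|G|\subseteq C$, on which both $g'$ and $n'$ act as the identity. To check $u^3(D)\subseteq T'$, I split $D=G\cup N^3(K)$: on $G$ one gets $g'$ directly, and on a triple $N^3(t_0)$ with $t_0\in K$ the key identity $U\circ N=|n'|$ on $|K|$ (which on $|L|$ follows from the attribute-level form $|g'|\circ|m|=|n'|\circ|l|=|n'|$ of the compatibility hypothesis, and on $|K|\setminus|L|$ holds by the defining cases of $U$ and $N$) reduces $U^3(N^3(t_0))$ to $n'(t_0)\in T'$. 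Uniqueness is forced because $u\circ g=g'$ pins $|u|$ down on $|G|$ and $u\circ n=n'$ pins it down on $N(|K|)\supseteq|K|\setminus|L|$, together covering $|D|$.

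The single delicate point is the well-definedness of $U$ on the overlap between $|G|$ and $|K|\setminus|L|$; this is exactly where the hypothesis $|K|\cap|G|\subseteq C$ is needed, since it forces the overlap into $C$, on which both $g'$ and $n'$ must be the identity. This is precisely why passing to the subcategory $\tgr{C}{A}$ rather than working in $\tgr{}{A}$ makes the elementary set-union construction yield an honest pushout.
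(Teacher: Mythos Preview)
Your proof is correct and follows essentially the same approach as the paper's: define the mediating map piecewise on $|G|$ and on the complementary piece of $|D|$, use the hypothesis $|K|\cap|G|\subseteq C$ to resolve the overlap, and verify the triple-level inclusion via the identity $U\circ N=|n'|$. Your argument is in fact slightly more careful than the paper's in two respects: you explicitly verify that $n$ fixes $C$, and you work with $|K|\setminus|L|$ rather than $|K\setminus L|$ (the paper's claim $N(|K\setminus L|)=|K\setminus L|$ is not quite right when $|K\setminus L|\cap|L|\not\subseteq C$, though this does not affect the pushout property).
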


Thus, $D$ is a kind of ``union of $G$ and $K$ over $L$'', however 
in general it is \emph{not} the case that $D$ is the union of $G$ and
$K\setminus L$. 
It is the case that $D=G\cup D_2$ where $D_2=N^3(K\setminus L)$
but $N^3$ is not the identity on $K\setminus L$,
and moreover $G$ and $D_2$ are not disjoint in general.


\section{The \poim transformation} 
\label{sec:poim}

A SPARQL query like 
``CONSTRUCT \{$R$\} WHERE \{$L$\}'' is called \emph{basic} 
when both $R$ and $L$ are basic graph patterns.
In such a query, variables with the same name in $L$ and $R$ denote the same
RDF term, whereas it is not the case for blank nodes.
The statement ``\doc{blank nodes in graph patterns act as variables}''
in \cite[Section~4.1.4]{sparql} holds for $L$, whereas blank nodes in $R$
give rise to new blank nodes in the result of the query as in
Examples~\ref{ex:poim-one-B-run} and~\ref{ex:construct-two-B-run}.
Thus, the meaning of blank nodes in $L$ is unrelated to the
meaning of blank nodes in $R$, 
and in both $L$ and $R$ each blank can be replaced by a new blank.

We generalize this situation in Definition~\ref{def:construct-rule}
by allowing any data graphs for $L$ and $R$ up to isomorphism in $\QGr{\IV}$:
the resource identifiers and the variables in $L$ and $R$ are fixed 
but each blank can be replaced by a new blank. 
Thus, without loss of generality, we can assume that 
$|L|_\B\cap|R|_\B=\emptyset$.
Under this assumption, the set of triples $K=L\cup R$  
with the inclusions of $L$ and $R$ in $K$ 
is a coproduct of $L$ and $R$ in the category $\QGr{\IV}$. 
We also assume that each variable in $R$ occurs in $L$,
so that every substitution for the variables in $L$ provides a 
substitution for the variables in $R$.
This assumption $|R|_\V\subseteq|L|_\V$ is equivalent to $|K|_\V=|L|_\V$.

\begin{definition}
\label{def:construct-rule}
A \emph{basic construct query} is a pair of finite query graphs $(L,R)$  
such that $|L|_\B\cap|R|_\B=\emptyset$ and $|R|_\V\subseteq|L|_\V$, 
up to isomorphism in the category $\QGr{\IV}$. 
The \emph{transformation rule} of a basic construct query $(L,R)$ is the
cospan $\rhoP_{L,R}=(L\upto{l} K\upfrom{r} R)$ 
where $K=L\cup R$ and $l$ and $r$ are the inclusions.
Its \emph{left-hand side} is $L$ and its \emph{right-hand side} is $R$.

$$ 
\rhoP_{L,R} \;=\; \xymatrix@C=4pc{
  L \ar[r]^(.35){l}_(.35){\subseteq} & K=L\cup R & \ar[l]_(.35){r}^(.35){\supseteq} R} 
$$ 
\end{definition}

\begin{example}
\label{ex:poim-one-V}

Consider the following SPARQL CONSTRUCT query:

\begin{Verbatim}[frame=single,label=Query,fontsize=\scriptsize]
 CONSTRUCT { ?x vcard:FN ?name } WHERE { ?x foaf:name ?name }
\end{Verbatim}

\noindent
In the corresponding transformation rule $L\upto{l} K\upfrom{r} R$
there are no blanks in $L$ nor in $R$,
thus the transformation rule is $L\upto{l} K\upfrom{r} R$
where $l$ and $r$ are the inclusions of $L$ and $R$ in $K=L\cup R$.

\medskip 
\noindent
\begin{minipage}{1.3in}
\begin{Verbatim}[frame=single,label=$L$,fontsize=\scriptsize] 
 ?x foaf:name ?name . 
\end{Verbatim} 
\end{minipage}
\hfill 
  $\upto{l}$
\hfill 
\begin{minipage}{1.3in}	
\begin{Verbatim}[frame=single,label=$K$,fontsize=\scriptsize] 
 ?x foaf:name ?name ;
    vcard:FN ?name .
\end{Verbatim}
\end{minipage}
\hfill 
  $\upfrom{r}$
\hfill 
\begin{minipage}{1.3in}	
\begin{Verbatim}[frame=single,label=$R$,fontsize=\scriptsize] 
 ?x vcard:FN ?name .
\end{Verbatim} 
\end{minipage}

\end{example} 

\begin{example}
\label{ex:poim-one-B}

Now the SPARQL CONSTRUCT query from Example~\ref{ex:poim-one-V}
is modified by replacing both occurrences of the variable \mbox{{\tt ?x}}
by the blank node \mbox{{\tt \blank{x}}}: 

\begin{Verbatim}[frame=single,label=Query,fontsize=\scriptsize]
 CONSTRUCT { _:x vcard:FN ?name } WHERE { _:x foaf:name ?name }
\end{Verbatim}

\noindent 
In the corresponding transformation rule
one blank has been modified so as to ensure that $|L|_\B\cap|R|_\B$ is empty:

\medskip 
\noindent
\begin{minipage}{1.3in}
\begin{Verbatim}[frame=single,label=$L$,fontsize=\scriptsize] 
 _:x foaf:name ?name . 
\end{Verbatim} 
\end{minipage}
\hfill 
  $\upto{l}$
\hfill 
\begin{minipage}{1.3in}	
\begin{Verbatim}[frame=single,label=$K$,fontsize=\scriptsize] 
 _:x foaf:name ?name .
 _:y vcard:FN ?name .
\end{Verbatim}
\end{minipage}
\hfill 
  $\upfrom{r}$
\hfill 
\begin{minipage}{1.3in}	
\begin{Verbatim}[frame=single,label=$R$,fontsize=\scriptsize] 
 _:y vcard:FN ?name .
\end{Verbatim} 
\end{minipage}

\end{example} 

When a basic SPARQL query ``CONSTRUCT \{$R$\} WHERE \{$L$\}'' is run
against an RDF graph $G$, and when there is precisely one match of $L$
into $G$, the result of the query is an RDF graph $H$ obtained by
substituting the variables in $R$. This
substitution can be seen as a match of $R$ into $H$.
We claim that the process of building $H$ with this match of $R$ into $H$
from the match of $L$ into $G$ can be seen as a two-step process involving an
intermediate match of $K$ in an RDF graph $D$.
The definition of this process relies on an algebraic construction
that we call the \emph{\poim transformation}: 
PO for \emph{pushout} and IM for \emph{image}
(Definition~\ref{def:construct-poim}).
The \poim transformation is related 
to a large family of algebraic graph transformations based on pushouts,
like the SPO (Simple Pushout) \cite{DBLP:conf/gg/EhrigHKLRWC97}, 
DPO (Double Pushout) \cite{CorradiniMREHL97}
or SqPO (Sesqui-Pushout) \cite{CorradiniHHK06}. 

Given a basic construct query $(L,R)$ and its transformation rule
$L\upto{l} K\upfrom{r} R$, 
the \poim transformation is defined as a map from the matches of $L$
to the matches of $R$, in two steps:
first from the matches of $L$ to the matches of $K$,
then from the matches of $K$ to the matches of $R$.  
Given an inclusion $l:L\to K$ in $\QGr{\I}$, 
the \emph{cobase change along $l$} is the 
map $l_*:\Match(L) \to \Match(K)$ that maps
each $m:L\to G$ to $l_*(m):K\to D$ defined from the pushout of $l$ and $m$
in $\QGr{\I}$, as described in Proposition~\ref{prop:graph-po}. 
Note that $D$ is a data graph because of the assumption $|K|_\V=|L|_\V$.
Given an inclusion $r:R\to K$ in $\QGr{\I}$, 
the \emph{image factorization along $r$} is the 
map $r^+:\Match(K) \to \Match(R)$ that maps
each $n:K\to D$ to $r^+(n):R\to H$ where $H$ is the image 
of $R$ in $D$ and $r^+(n)$ is the restriction of $n$ 
and $h:H\to D$ is the inclusion.
This leads to Definition~\ref{def:construct-poim}
and Proposition~\ref{prop:construct-poim}.

\begin{definition}
\label{def:construct-poim}
Let $(L,R)$ be a basic construct query and $L\upto{l} K\upfrom{r} R$
its transformation rule.
The \emph{\poim transformation map} of $(L,R)$ is the map 
$$ \fpoim_{L,R} = r^+\circ l_*:\Match(L) \to \Match(R) $$
composed of the cobase change map $l_*$ and the image factorization map $r^+$.
The \emph{result} of applying $\fpoim_{L,R}$ to a match $m:L\to G$
is the match $\fpoim_{L,R}(m):R\to H$ or simply the query graph $H$. 
\end{definition}
\begin{equation}
  \label{diag:poim}
\xymatrix@R=2.5pc@C=6pc{
  \ar@{}[rd]|{(PO)} L \ar[r]^{l} \ar[d]_(.4){m} &
  K \ar@{.>}[d]_(.4){l_*(m)}|(.4){=}^(.4){n} \ar@{}[rd]|{(IM)} & 
  R \ar[l]_{r} \ar@{.>}[d]_(.4){r^+(n)}|(.4){=}^(.4){p}|(.6){=}^(.6){\fpoim_{L,R}(m)} \\
  G \ar@{.>}[r]^{g} & D & H \ar@{.>}[l]_{h} \\ 
}
\end{equation}

Note that the result $H$ is defined only up to isomorphism in $\QGr{\I}$,
which means that the blanks in $H$ can be modified  
(as long as this modification does not identify any of them). 

\begin{proposition}
\label{prop:construct-poim}
Let $(L,R)$ be a basic construct query and $m:L\to G$ a match. 
Let $P:|R|\to A$ be defined by $P(x)=|m|(x)$ for $x\in|R|_\V$ and $P(x)=x$
otherwise.
Then, up to isomorphism in $\QGr{\I}$, 
the result of applying $\fpoim_{L,R}$ to $m$ is $p:R\to H$
where $H=P^3(R)$ and $p$ is the restriction of $P^3$. 
\end{proposition}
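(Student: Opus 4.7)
The plan is to apply Proposition~\ref{prop:graph-po} to obtain an explicit description of the pushout $n = l_*(m) : K \to D$, then unfold the image factorization $r^+(n) : R \to H$ and match the resulting data with the formula involving $P$ from the statement.

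First, using Remark~\ref{rem:graph-iso} I may assume, up to isomorphism in $\QGr{\I}$, that $|K| \cap |G| \subseteq \I$. Only the blanks of $K$ might need to be renamed for this: the assumption $|R|_\V \subseteq |L|_\V$ gives $|K|_\V = |L|_\V$, so every variable of $K$ already appears in $|L|$ and is sent by $m$ into $|G|$, producing no new conflict. Under this hypothesis Proposition~\ref{prop:graph-po} constructs $n$ as the restriction of $N^3$ to $K$, where $N : |K| \to |G| \cup |K \setminus L|$ is defined by $N(x) = |m|(x)$ for $x \in |L|$ and $N(x) = x$ otherwise. Applying the definition of image factorization to $r : R \to K$, one reads off $H = n(R) = N^3(R)$ and that $p$ is the restriction of $N^3$ to $R$ with codomain $H$.

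It then suffices to verify that $N$ and $P$ agree on $|R|$; this is a short case analysis using the two hypotheses on a basic construct query. On $|R|_\V$ the inclusion $|R|_\V \subseteq |L|_\V$ gives $N(x) = |m|(x) = P(x)$. On $|R|_\I$ both maps equal the identity, since $m$ fixes $\I$ (so $N(x) = |m|(x) = x$ when $x \in |L|$, and $N(x) = x$ otherwise) while $P(x) = x$ by definition. On $|R|_\B$ the hypothesis $|L|_\B \cap |R|_\B = \emptyset$ forces $x \notin |L|$, so $N(x) = x = P(x)$. Hence $N^3(R) = P^3(R)$ and $p$ is the restriction of $P^3$.

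The only obstacle is a mild bookkeeping one: one must check that the case $x \in |R| \cap |L|$ is handled correctly for each kind of attribute and verify that the two hypotheses on the query are used at exactly the right places. No deeper issue arises, since the proposition is essentially an unfolding of the pushout-and-image construction specialized to the cospan $L \to K \leftarrow R$ with the disjointness and inclusion constraints on blanks and variables.
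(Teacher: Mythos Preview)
Your proof is correct and follows essentially the same route as the paper: reduce (up to isomorphism in $\QGr{\I}$) to the situation $|K|\cap|G|\subseteq\I$, read off $N$ from Proposition~\ref{prop:graph-po}, and then check by a short case analysis on $|R|_\V$, $|R|_\I$, $|R|_\B$ that $N$ and $P$ coincide on $|R|$. The only cosmetic difference is that the paper phrases the case analysis via the decomposition of $|L|\cap|R|$ into its $\I$-part and its $\V$-part, whereas you split $|R|$ directly; the content is the same.
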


\begin{remark}
  \label{rem:poim-category}
  Each set $\Match(X)$ can be seen as a coslice category, then the maps
  $r^+$ and $l_*$ can be seen as functors: this could be useful
  when extending this paper to additional features of SPARQL. 
\end{remark}

\begin{example}
  \label{ex:poim-one-V-run}
  
Consider the SPARQL CONSTRUCT query from Example~\ref{ex:poim-one-V}: 

\begin{Verbatim}[frame=single,label=Query,fontsize=\scriptsize]
 CONSTRUCT { ?x vcard:FN ?name } WHERE { ?x foaf:name ?name }
\end{Verbatim}

\noindent
and let us run this query against the RDF graph $G$:

\begin{Verbatim}[frame=single,label=$G$,fontsize=\scriptsize] 
 ex:a foaf:name "Alice" ; foaf:nick "Lissie" .
\end{Verbatim}

\noindent
There is a single match $m$, it is such that 
$ m({\tt ?x}) = \mbox{{\tt ex:a}} $ and
$ m(\mbox{{\tt ?name}}) = \mbox{{\tt "Alice"}} $.
The \poim transformation produces successively the following
data graphs $D$ and $H$, where $H$ is the query result:

\medskip 
\noindent
\begin{minipage}{1.4in}
\begin{Verbatim}[frame=single,label=$G$,fontsize=\scriptsize] 
  ex:a foaf:name "Alice" ;
       foaf:nick "Lissie" .
\end{Verbatim} 
\end{minipage}
\hfill 
  $\upto{g}$
\hfill 
\begin{minipage}{1.4in}	
\begin{Verbatim}[frame=single,label=$D$,fontsize=\scriptsize] 
 ex:a foaf:name "Alice" ;
      foaf:nick "Lissie" ; 
      vcard:FN "Alice" . 
\end{Verbatim}
\end{minipage}
\hfill 
  $\upfrom{h}$
\hfill 
\begin{minipage}{1.3in}	
\begin{Verbatim}[frame=single,label=$H$,fontsize=\scriptsize] 
 ex:a vcard:FN "Alice" . 
\end{Verbatim} 
\end{minipage}

\end{example} 

\begin{example}
\label{ex:poim-one-B-run}
  
Now consider the SPARQL CONSTRUCT query from Example~\ref{ex:poim-one-B}:  

\begin{Verbatim}[frame=single,label=Query,fontsize=\scriptsize]
 CONSTRUCT { _:x vcard:FN ?name } WHERE { _:x foaf:name ?name }
\end{Verbatim}

\noindent
Let us run this query against the RDF graph $G$ from
Example~\ref{ex:poim-one-V-run}. 
There is a single match $m$, it is such that 
$ m({\tt \_:x}) = \mbox{{\tt ex:a}} $ and
$ m(\mbox{{\tt ?name}}) = \mbox{{\tt "Alice"}} $.
The \poim transformation produces successively the following
data graphs $D$ and $H$, where $H$ is the query result:

\medskip 
\noindent
\begin{minipage}{1.4in}
\begin{Verbatim}[frame=single,label=$G$,fontsize=\scriptsize] 
  ex:a foaf:name "Alice" ;
      foaf:nick "Lissie" .
\end{Verbatim} 
\end{minipage}
\hfill 
  $\upto{g}$
\hfill 
\begin{minipage}{1.4in}	
\begin{Verbatim}[frame=single,label=$D$,fontsize=\scriptsize] 
 ex:a foaf:name "Alice" ;
      foaf:nick "Lissie" . 
 _:b vcard:FN "Alice" . 
\end{Verbatim}
\end{minipage}
\hfill 
  $\upfrom{h}$
\hfill 
\begin{minipage}{1.3in}	
\begin{Verbatim}[frame=single,label=$H$,fontsize=\scriptsize] 
 _:b vcard:FN "Alice" . 
\end{Verbatim} 
\end{minipage}

\end{example} 


\section{Running basic construct queries} 
\label{sec:construct}

In Section~\ref{sec:poim}, we defined the \poim transformation and
we applied it to run a basic construct query $(L,R)$ against a data graph $G$, 
under the assumption that there is exactly one match
from $L$ to $G$. Now we define two different calculi for 
running a basic construct query against a data graph $G$
without any assumption on the number of matches.
The \emph{high-level calculus} (Definition~\ref{def:construct-high})
is one ``large'' application of the \poim transformation. 
The \emph{low-level calculus} (Definition~\ref{def:construct-low})
consists of several ``small'' applications of the \poim transformation
followed by a ``merging'' process.
In Propositions~\ref{prop:construct-high} and~\ref{prop:construct-low}
we prove that both calculi return the same result.
This result coincides (up to the renaming of the blanks) 
with the result returned by SPARQL when
$L$ and $R$ are basic graph patterns and $G$ is an RDF graph
(Theorem~\ref{theo:construct-sparql}). 

\begin{definition}
\label{def:construct-result}
Let $(L,R)$ be a basic construct query and $G$ a data graph.
Assume (without loss of generality) that $|G|_\B\cap |L|_\B=\emptyset$
and $|G|_\B\cap |R|_\B=\emptyset$. 
Let $m_1,...,m_k$ be the matches from $L$ to $G$.
For each $i=1,...,k$ let $H_i$ be the data graph obtained from $R$
by replacing each variable $x$ in $R$ by $m_i(x)$
and each blank in $R$ by a new blank (which means:  
a new blank for each blank in $R$ and each $i$ in $\{1,...,k\}$).
The \emph{query result} of applying the basic construct query $(L,R)$
to the data graph $G$ is the data graph $H=H_1 \cup ...\cup H_k$.
\end{definition}

A triple $(s,p,o)$ in $(\I\cup \B)^3$, where $\I=\iri\cup\lit$,
is \emph{well-formed} if it is an RDF triple,
in the sense that $s\in\iri\cup \B$ and $p\in\iri$. 
Thus, a data graph is an RDF graph if and only if all its triples
are well-formed.
The \emph{answer} of a SPARQL CONSTRUCT query over an RDF graph
is defined in \cite{Kostylev_et_al2015}.

\begin{theorem}
\label{theo:construct-sparql}
Let $L$ and $R$ be basic graph patterns with $|L|_\B=\emptyset$
and $|R|_\V\subseteq |L|_\V$.
Then $(L,R)$ is a basic construct query and 
the set of well-formed triples in the query result of applying $(L,R)$
to an RDF graph $G$ is isomorphic in $\DGr{\I}$ to the answer
of the SPARQL query ``CONSTRUCT $\{R\}$ WHERE $\{L\}$'' over $G$.  
\end{theorem}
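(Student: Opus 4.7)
The plan is to establish the two clauses of the statement separately. First, I would verify that $(L,R)$ qualifies as a basic construct query in the sense of Definition~\ref{def:construct-rule}: finiteness is inherited from the fact that basic graph patterns are finite sets of triple patterns; the condition $|L|_\B\cap|R|_\B=\emptyset$ is immediate since $|L|_\B=\emptyset$; and $|R|_\V\subseteq|L|_\V$ is given as a hypothesis. So the transformation rule $L\upto{l} K\upfrom{r} R$ with $K=L\cup R$ is well-defined.

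For the main equivalence, the approach is to compare, set-theoretically and up to renaming of blanks, the two constructions. On the paper side, Definition~\ref{def:construct-result} enumerates the matches $m_1,\dots,m_k$ from $L$ to $G$ and produces $H=\bigcup_i H_i$, each $H_i$ being $R$ with every variable $x$ replaced by $|m_i|(x)$ and each blank replaced by a new blank; the claim concerns $H^{\mathrm{wf}}$, the subset of well-formed triples in $H$. On the SPARQL side, following the formalization of Kostylev et al.\ cited just before the statement, the answer is obtained by (a) evaluating the BGP $L$ over $G$ to a set of solution mappings, (b) instantiating the template $R$ under each solution with fresh blanks for the blanks in $R$, (c) discarding ill-formed triples, and (d) taking the union.

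The key intermediate step is to exhibit a bijection between matches and SPARQL solution mappings. Because $|L|_\B=\emptyset$, Definition~\ref{def:graph-match} tells us that a match $m:L\to G$ is precisely a total map $|m|:|L|_\V\to|G|$ (fixing $\I$) such that $|m|^3(L)\subseteq G$; this is exactly a SPARQL solution mapping for $L$ over $G$ with domain $|L|_\V$. Under this bijection, the description of $H_i$ in Definition~\ref{def:construct-result} coincides term-for-term with the SPARQL template instantiation for the corresponding solution mapping: variables are substituted identically, and both constructions pick fresh blanks per solution per blank in $R$. Passing to well-formed triples on one side matches the removal of ill-formed triples on the other, and taking the union on both sides yields RDF graphs that agree up to the choice of fresh blanks, i.e., up to isomorphism in $\DGr{\I}$.

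The main obstacle I anticipate is a notational rather than a mathematical one: the SPARQL reference and the Kostylev et al.\ formalization handle blank-node scoping through a rather intricate convention (standardization-apart of blanks between template instantiations), and one has to make precise that ``new blank'' in Definition~\ref{def:construct-result} instantiates this convention in the same isomorphism class. Once this bookkeeping is laid out — typically by picking, for each $i$, an injection of $|R|_\B$ into a set of blanks disjoint from $|G|_\B$ and from the images chosen for $j\ne i$ — the verification reduces to a routine equality of subsets of $\IB^3$ together with the observation, already used in Remark~\ref{rem:graph-iso}, that any two such choices yield data graphs isomorphic in $\DGr{\I}$.
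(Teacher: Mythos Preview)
Your proposal is correct and follows essentially the same approach as the paper: the core observation in both is that, since $|L|_\B=\emptyset$, the solution mappings $\mu$ from $L$ to $G$ (extended to triples) are in bijection with the matches $m:L\to G$, after which the per-solution instantiations of $R$ with fresh blanks coincide and the well-formedness filtering matches on both sides. Your write-up is in fact more detailed than the paper's proof, which dispatches the argument in one sentence once the mapping/match bijection is noted.
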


\begin{example}
\label{ex:construct-two-V-run}
  
Consider the SPARQL query from Examples~\ref{ex:poim-one-V}
and~\ref{ex:poim-one-V-run}:

\begin{Verbatim}[frame=single,label=Query,fontsize=\scriptsize]
 CONSTRUCT { ?x vcard:FN ?name } WHERE { ?x foaf:name ?name }
\end{Verbatim}

\noindent
and let us run this query against the RDF graph $G$:

\begin{Verbatim}[frame=single,label=$G$,fontsize=\scriptsize] 
 ex:a foaf:name "Alice" ; foaf:nick "Lissie" .
 ex:b foaf:name "Bob" ; foaf:nick "Bobby" .
\end{Verbatim}

\noindent 
There are two matches and we get the RDF graphs $H_1$, $H_2$
and the result $H$: 

\medskip
\noindent 
\begin{minipage}{1.3in}	
\begin{Verbatim}[frame=single,label=$H_1$,fontsize=\scriptsize] 
 ex:a vcard:FN "Alice" . 
\end{Verbatim} 
\end{minipage}
\hfill 
\begin{minipage}{1.3in}	
\begin{Verbatim}[frame=single,label=$H_2$,fontsize=\scriptsize] 
 ex:b vcard:FN "Bob" . 
\end{Verbatim} 
\end{minipage}
\hfill 
\begin{minipage}{1.3in}	
\begin{Verbatim}[frame=single,label=$H$,fontsize=\scriptsize] 
 ex:a vcard:FN "Alice" . 
 ex:b vcard:FN "Bob" . 
\end{Verbatim} 
\end{minipage}

\end{example}

\begin{example}
\label{ex:construct-two-B-run}
  
Consider the SPARQL CONSTRUCT query:

\begin{Verbatim}[frame=single,label=Query,fontsize=\scriptsize]
 CONSTRUCT { _:c vcard:FN ?name } WHERE { ?x foaf:name ?name }
\end{Verbatim}

\noindent
Note that this query always returns the same result as the query 
from Examples~\ref{ex:poim-one-B} and~\ref{ex:poim-one-B-run}.
Let us run this query against the RDF graph $G$ from
Example~\ref{ex:construct-two-V-run}.
There are two matches and we get the RDF graphs $H_1$, $H_2$
and the result $H$: 

\medskip
\noindent 
\begin{minipage}{1.3in}	
\begin{Verbatim}[frame=single,label=$H_1$,fontsize=\scriptsize] 
 _:c vcard:FN "Alice" . 
\end{Verbatim} 
\end{minipage}
\hfill 
\begin{minipage}{1.3in}	
\begin{Verbatim}[frame=single,label=$H_2$,fontsize=\scriptsize] 
 _:c vcard:FN "Bob" . 
\end{Verbatim} 
\end{minipage}
\hfill 
\begin{minipage}{1.3in}	
\begin{Verbatim}[frame=single,label=$H$,fontsize=\scriptsize] 
 _:c1 vcard:FN "Alice" . 
 _:c2 vcard:FN "Bob" . 
\end{Verbatim} 
\end{minipage}

\end{example}

Let $k$ be a natural number. 
According to Proposition~\ref{prop:graph-coprod}, for each query graph $T$
the query graph $k\,T$, coproduct of $k$ copies of $T$ in $\QGr{\I}$,
can be built (up to isomorphism) as follows:
for each $i\in\{1,...,k\}$ let $T_i$ be a copy of $T$
where each blank and variable has been renamed 
in such a way that there is no blank or variable common to two of the $T_i$'s, 
then the query graph $k\,T$ is the union $T_1 \cup ...\cup T_k$.
Now let $(L,R)$ be a basic construct query. 
The transformation rule $\rhoP_{L,R}=(L\upto{l} K\upfrom{r} R)$
is a cospan in $\QGr{\I}$, that gives rise to the cospan
$k\,\rhoP_{L,R}=(k\,L\upto{k\,l} k\,K\upfrom{k\,r} k\,R)$. 
Since $l$ and $r$ are inclusions, this renaming can be done simultaneously
in the copies of $L$, $K$ and $R$, so that $k\,K = k\,L \cup k\,R$
and $k\,l$ and $k\,r$ are the inclusions. 
Thus, $(k\,L,k\,R)$ is a basic construct query and 
$\rhoP_{k\,L,k\,R}=k\,\rhoP_{L,R}$ is its corresponding transformation rule. 

\begin{definition}
\label{def:construct-high}
Let $(L,R)$ be a basic construct query and $G$ a data graph.
Let $m_1,...,m_k$ be the matches from $L$ to $G$.
Consider the basic construct query $(k\,L,k\,R)$. 
Let $m$ be the match from $k\,L$ to $G$
that coincides with $m_i$ on the $i$-th component of $k\,L$.
The \emph{high-level query result of $(L,R)$ against $G$}
is the result $H_\high$ of applying the \poim transformation map 
$\fpoim_{k\,L,k\,R}$ to the match $m:k\,L\to G$,
as in Diagram~(\ref{diag:high}). 
\begin{equation}
\label{diag:high} 
\xymatrix@R=2pc@C=6pc{
  \ar@{}[rd]|{(PO)} k\,L \ar[r]^{k\,l} \ar[d]_{m} &
  \ar@{}[rd]|{(IM)} k\,K \ar@{.>}[d]^{n} &
  k\,R \ar[l]_{k\,r} \ar@{.>}[d]^{p} \\
  G \ar@{.>}[r]_{g} & D & H_\high \ar@{.>}[l]^{h} \\ 
} 
\end{equation}
\end{definition}

\begin{proposition}
\label{prop:construct-high} 
Let $(L,R)$ be a basic construct query and $G$ a data graph.
The high-level query result of $(L,R)$ against $G$ is isomorphic, 
in the category $\DGr{\I}$, to the query result of $(L,R)$ against $G$.
\end{proposition}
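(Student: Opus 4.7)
The plan is to unfold the high-level \poim transformation using Proposition~\ref{prop:construct-poim} and show that its output decomposes triple-by-triple along the coproduct structure $k\,R = R_1\cup\ldots\cup R_k$, matching Definition~\ref{def:construct-result} copy by copy. First, I apply Proposition~\ref{prop:construct-poim} to the basic construct query $(k\,L,k\,R)$ together with the combined match $m:k\,L\to G$. This gives $H_\high=P^3(k\,R)$ where $P:|k\,R|\to\IB$ is defined by $P(x)=|m|(x)$ for $x\in|k\,R|_\V$ and $P(x)=x$ on blanks and resource identifiers. Note that $P$ indeed lands in $\IB$ because matches fix $\I$ and send variables to elements of $|G|\subseteq\IB$, and because $|k\,R|_\V=|k\,L|_\V$ by the assumption $|R|_\V\subseteq|L|_\V$.

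Second, I decompose along the coproduct. By Proposition~\ref{prop:graph-coprod} and the construction of $k\,R$ preceding Definition~\ref{def:construct-high}, one may choose representatives $R_1,\ldots,R_k$ of the copies of $R$ such that $|R_i|_\B\cap|R_j|_\B=\emptyset$ and $|R_i|_\V\cap|R_j|_\V=\emptyset$ for $i\ne j$, with $k\,R=R_1\cup\ldots\cup R_k$. Using Remark~\ref{rem:graph-iso}, I may further arrange (up to isomorphism in $\QGr{\I}$) that all these blanks lie outside $|G|_\B$. Since the sets $|R_i|_\V$ are pairwise disjoint, $P$ acts on each $|R_i|$ independently, so
\begin{equation*}
H_\high \;=\; P^3(k\,R) \;=\; \bigcup_{i=1}^{k} P^3(R_i).
\end{equation*}

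Third, I identify each $P^3(R_i)$ with the data graph $H_i$ of Definition~\ref{def:construct-result}. The combined match $m$ coincides with $m_i$ on the $i$-th component (after identifying $R_i$ with $R$ via the chosen renaming), so the restriction of $P$ to $|R_i|$ (i) substitutes each variable by its image under $m_i$, and (ii) leaves each blank of $R_i$ unchanged; these blanks are by construction fresh with respect to $G$ and to the other $R_j$'s. This is exactly what Definition~\ref{def:construct-result} requires of $H_i$, modulo the specific choice of fresh blank names. Hence there is an isomorphism $H_i\to P^3(R_i)$ in $\DGr{\I}$ given by the obvious bijection on blanks and the identity on resource identifiers, and because blanks are pairwise disjoint on both sides across $i=1,\ldots,k$, these isomorphisms assemble into a single isomorphism $H=H_1\cup\ldots\cup H_k\to H_\high$ fixing $\I$.

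The main (and essentially only) obstacle is bookkeeping for blanks: one must coherently pick representatives of $k\,R$ and of the $H_i$'s whose blanks avoid $|G|_\B$ and remain pairwise disjoint across the $k$ copies, so that the per-copy bijections on blanks glue into a well-defined global bijection. Assumption~\ref{ass:graph} and Remark~\ref{rem:graph-iso} provide enough fresh elements of $\overline{\I}$ for this, and once representatives are fixed the required isomorphism in $\DGr{\I}$ is immediate.
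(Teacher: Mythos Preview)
Your proposal is correct and follows the same approach as the paper, which simply states that the result is a consequence of Proposition~\ref{prop:construct-poim}; you have supplied the detailed unfolding of that consequence along the coproduct decomposition $k\,R=R_1\cup\ldots\cup R_k$. One small inaccuracy: you write $|k\,R|_\V=|k\,L|_\V$, but in general only the inclusion $|k\,R|_\V\subseteq|k\,L|_\V$ holds (inherited from $|R|_\V\subseteq|L|_\V$); this inclusion is all that is needed for $P$ to be well-defined on the variables of $k\,R$.
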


The low-level calculus is a two-step process: 
first one \emph{local result} is obtained for each match,
using a \poim transformation, 
then the local results are glued together in order to form the
\emph{low-level query result}. 

\begin{definition}
\label{def:construct-low}
Let $(L,R)$ be a basic construct query and $G$ a data graph. 
Let $m_1,...,m_k$ be the matches from $L$ to $G$.
For each $i=1,...,k$ let $G_i$ be the image of $m_i$
and let us still denote $m_i$ the restriction $m_i:L\to G_i$. 
The \emph{local result} $H_i$ of $(L,R)$ against $G$ along $m_i$ 
is the result of applying the \poim transformation map $\fpoim_{L,R}$
to the match $m_i:L\to G_i$.
Let $\IB(G)=\I\cup|G|_\B$. 
The \emph{low-level query result} $H_\low$ of $(L,R)$ against $G$ is the
coproduct of the $H_i$'s in the category $\DGr{\IB(G)}$ of data graphs with
morphisms fixing all resource identifiers and the blanks that are in $G$.  
\end{definition} 

\begin{equation}
\label{diag:low}
\xymatrix@R=2pc@C=6pc{
  \ar@{}[rd]|{(PO)} L \ar[r]^{l} \ar[d]_{m_i} &
  \ar@{}[rd]|{(IM)} K \ar@{.>}[d]^{n_i} &
  R \ar[l]_{r} \ar@{.>}[d]^{p_i} \\
  G_i \ar@{.>}[r]_{g_i} & D_i & H_i \ar@{.>}[l]^{h_i} \\ 
} 
\end{equation}

\begin{example}
\label{ex:construct-low} 
Let us apply the low-level calculus to Example~\ref{ex:construct-two-B-run}. 
The match $m_1$ produces $G_1 \to D_1 \from H_1$:

\medskip 
\noindent
\begin{minipage}{1.4in}
\begin{Verbatim}[frame=single,label=$G_1$,fontsize=\scriptsize] 
 ex:a foaf:name "Alice" .
\end{Verbatim} 
\end{minipage}
\hfill 
  $\upto{g_1}$
\hfill 
\begin{minipage}{1.4in}	
\begin{Verbatim}[frame=single,label=$D_1$,fontsize=\scriptsize] 
 ex:a foaf:name "Alice" .
 _:c vcard:FN "Alice" . 
\end{Verbatim}
\end{minipage}
\hfill 
  $\upfrom{h_1}$
\hfill 
\begin{minipage}{1.3in}	
\begin{Verbatim}[frame=single,label=$H_1$,fontsize=\scriptsize] 
 _:c vcard:FN "Alice" . 
\end{Verbatim} 
\end{minipage}

\noindent
and similarly the match $m_2$ produces $G_2 \to D_2 \from H_2$:

\medskip 
\noindent
\begin{minipage}{1.4in}
\begin{Verbatim}[frame=single,label=$G_2$,fontsize=\scriptsize] 
 ex:b foaf:name "Bob" .
\end{Verbatim} 
\end{minipage}
\hfill 
  $\upto{g_2}$
\hfill 
\begin{minipage}{1.4in}	
\begin{Verbatim}[frame=single,label=$D_2$,fontsize=\scriptsize] 
 ex:b foaf:name "Bob" . 
 _:c vcard:FN "Bob" . 
\end{Verbatim}
\end{minipage}
\hfill 
  $\upfrom{h_2}$
\hfill 
\begin{minipage}{1.3in}	
\begin{Verbatim}[frame=single,label=$H_2$,fontsize=\scriptsize] 
 _:c vcard:FN "Bob" . 
\end{Verbatim} 
\end{minipage}	

\smallskip
\noindent
Finally the query result $H_\low$, which is the coproduct of $H_1$ and $H_2$
in category $\DGr{\IB(G)}$, is isomorphic to $H$ from
Example~\ref{ex:construct-two-B-run}.

\begin{Verbatim}[frame=single,label=$H_\low$,fontsize=\scriptsize] 
 _:c1 vcard:FN "Alice" . _:c2 vcard:FN "Bob" . 
\end{Verbatim} 

\end{example}

\begin{example}
\label{ex:avecrecollement}

This example illustrates how local results are ``merged'' to compute
the result in the low-level calculus.
The SPARQL query is the following: 

\begin{Verbatim}[frame=single,label=Query,fontsize=\scriptsize]
CONSTRUCT { ?x rel:acquaintanceof ?z . } WHERE { ?x foaf:knows ?y . ?y foaf:knows ?z . }
\end{Verbatim}

\vspace{-5pt}
\noindent 
Its corresponding transformation rule is: 

\medskip 
\noindent
\begin{minipage}{1.1in}
\begin{Verbatim}[frame=single,label=$L$,fontsize=\scriptsize] 
 ?x foaf:knows ?y . 
 ?y foaf:knows ?z .   
\end{Verbatim} 
\end{minipage}
\hfill 
  $\upto{l}$
\hfill 
\begin{minipage}{1.5in}	
\begin{Verbatim}[frame=single,label=$K$,fontsize=\scriptsize] 
 ?x foaf:knows ?y ; 
    rel:acquaintanceOf ?z . 
 ?y foaf:knows ?z .        
\end{Verbatim}
\end{minipage}
\hfill 
  $\upfrom{r}$
\hfill 
\begin{minipage}{1.5in}	
\begin{Verbatim}[frame=single,label=$R$,fontsize=\scriptsize] 
 ?x rel:acquaintanceOf ?z . 
\end{Verbatim} 
\end{minipage}

\smallskip
\noindent 
This query is applied to the following graph $G$:

\begin{Verbatim}[frame=single,label=$G$,fontsize=\scriptsize] 
 <http://example.org/Alice> foaf:knows <http://example.org/Bob> .
 <http://example.org/Bob> foaf:knows _:c .
 _:c foaf:knows <http://example.org/Alice> .
\end{Verbatim}

\noindent 
There are three matches $m_1$, $m_2$, $m_3$,
thus three local results $H_1$, $H_2$, $H_3$: 

\begin{Verbatim}[frame=single,label=$H_1$,fontsize=\scriptsize]
 <http://example.org/Alice>  rel:acquaintanceOf _:c . 
\end{Verbatim}

\begin{Verbatim}[frame=single,label=$H_2$,fontsize=\scriptsize]
 _:c rel:acquaintanceOf <http://example.org/Bob> . 
\end{Verbatim}

\begin{Verbatim}[frame=single,label=$H_3$,fontsize=\scriptsize]
 <http://example.org/Bob>  rel:acquaintanceOf <http://example.org/Alice> .
\end{Verbatim}

\noindent 
The blank \mbox{{\tt \_:c}} in $H_1$ and $H_2$ is not duplicated in
the coproduct $H_\low$ because it comes from $G$. 
Thus the result is: 
	
\begin{Verbatim}[frame=single,label=$H_\low$,fontsize=\scriptsize] 
 <http://example.org/Alice>  rel:acquaintanceOf _:c . 
 _:c rel:acquaintanceOf <http://example.org/Bob> . 
 <http://example.org/Bob>  rel:acquaintanceOf <http://example.org/Alice> .
\end{Verbatim}	
\end{example}

\begin{proposition}
\label{prop:construct-low}
Let $(L,R)$ be a basic construct query and $G$ a data graph.
The low-level query result of $(L,R)$ against $G$ is isomorphic, 
in the category $\DGr{\I}$, to the query result of $(L,R)$ against $G$.
\end{proposition}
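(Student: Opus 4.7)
The plan is to compute each local result $H_i$ explicitly via Proposition~\ref{prop:construct-poim}, take the coproduct in $\DGr{\IB(G)}$ using Proposition~\ref{prop:graph-coprod} together with Remark~\ref{rem:graph-iso}, and then match this against Definition~\ref{def:construct-result}. First, I would apply Proposition~\ref{prop:construct-poim} to each match $m_i:L\to G_i$: up to isomorphism in $\QGr{\I}$, the local result $H_i$ equals $P_i^3(R)$, where $P_i(x)=|m_i|(x)$ for $x\in|R|_\V$ and $P_i(x)=x$ otherwise. Since $|R|_\V\subseteq|L|_\V$ and $m_i$ is a match $L\to G$, every $|m_i|(x)$ lies in $|G|\subseteq\IB(G)$, so the attributes of $H_i$ split cleanly as $|H_i|=(|H_i|\cap\IB(G))\sqcup(|H_i|\cap|R|_\B)$: IRIs, literals and substituted variables belong to $\IB(G)$, while the blanks coming unchanged from $R$ lie in $|R|_\B$, which is disjoint from $\IB(G)$ thanks to the hypothesis $|G|_\B\cap|R|_\B=\emptyset$ in Definition~\ref{def:construct-result}.

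Second, I would invoke Proposition~\ref{prop:graph-coprod} and the paragraph that follows it (which uses Remark~\ref{rem:graph-iso}) to describe the coproduct $H_\low$ of the $H_i$'s in $\DGr{\IB(G)}$: up to isomorphism in $\DGr{\IB(G)}$ one may replace each $H_i$ by $H'_i\cong H_i$ with $|H'_i|\cap|H'_j|\subseteq\IB(G)$ for all $i\ne j$, after which $H_\low=H'_1\cup\dots\cup H'_k$. Concretely, since the only attributes of $H_i$ outside $\IB(G)$ are the blanks it inherits from $R$, and these are not fixed by $\DGr{\IB(G)}$-morphisms, the renaming amounts to replacing each such blank in each $H_i$ by a fresh blank outside $\IB(G)$, distinct from all blanks occurring in any other $H'_j$.

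Third, I would compare with Definition~\ref{def:construct-result}: there $H_i$ is defined as $R$ with each variable $x$ replaced by $m_i(x)$ (the very substitution performed by $P_i$) and each blank of $R$ replaced by a new blank, the news blanks being pairwise distinct across $i$. This is exactly the $H'_i$ produced above, so taking unions yields $H_\low\cong H$ in $\DGr{\IB(G)}$, and a fortiori in $\DGr{\I}$ since $\I\subseteq\IB(G)$.

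The main obstacle is the bookkeeping of blanks, namely keeping track of which blanks must be preserved and potentially shared between local results and which must be freshened for each match. The blanks in $|G|_\B$ are fixed by $\DGr{\IB(G)}$-morphisms, so they remain identified across the $H'_i$'s, which is precisely the ``merging'' illustrated by the blank \verb|_:c| in Example~\ref{ex:avecrecollement}; the blanks originally in $|R|_\B$ are not fixed and are therefore freshened per match, producing the distinct blanks \verb|_:c1|, \verb|_:c2| of Example~\ref{ex:construct-two-B-run}. The disjointness assumptions $|G|_\B\cap|L|_\B=\emptyset$ and $|G|_\B\cap|R|_\B=\emptyset$ built into Definition~\ref{def:construct-result}, together with $|L|_\B\cap|R|_\B=\emptyset$ from Definition~\ref{def:construct-rule}, are exactly what allow this separation to be carried out unambiguously.
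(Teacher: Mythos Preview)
Your proposal is correct and follows essentially the same approach as the paper: the paper's proof is a two-line sketch that simply invokes Proposition~\ref{prop:construct-poim} for the shape of each local result and Proposition~\ref{prop:graph-coprod} for the coproduct, and you have filled in precisely these details. Your explicit tracking of which blanks fall inside $\IB(G)$ (and hence are shared across the $H'_i$'s) versus which come from $|R|_\B$ (and hence are freshened per match) is a welcome elaboration that the paper leaves implicit.
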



\section{Running basic select queries} 
\label{sec:select}  

The CONSTRUCT query form of SPARQL returns a data graph
whereas the SELECT query form returns a table, 
like the SELECT query form of SQL.
Both in SQL and in SPARQL, it is well-known that such tables 
are not exactly relations in the mathematical sense:
in mathematics a relation on $X_1,...,X_n$ is a subset of the cartesian product
$X_1\!\times\!...\!\times\! X_n$, while the result of a SELECT query in SQL or SPARQL
is a multiset of elements of $X_1\!\times\!...\!\times\! X_n$. 
In order to avoid ambiguities, such a multiset 
is called a \emph{multirelation} on $X_1,...,X_n$.
When all $X_i$'s are the same set $X$ it is called a multirelation 
of arity $n$ on $X$.

A SPARQL query such as ``SELECT $?s_1,...,?s_n$ WHERE \{$L$\}'' is called
\emph{basic} when $L$ is a basic graph pattern and $?s_1,...,?s_n$
are distinct variables. 
We generalize this situation by defining a \emph{basic select query}
as a pair $(L,S)$ where $L$ is a finite query graph and $S$ is a finite set 
of variables. Then we associate to each basic select query $(L,S)$
a basic construct query $(L,\gr(S))$. Finally we define the result of
running the basic select query $(L,S)$ against a data graph
$G$ from the data graph $H$ result of running the basic construct query
$(L,\gr(S))$ against $G$.
This process is first described on an example. 

\begin{example} 
\label{ex:select-construct}
  
Consider the following SPARQL SELECT query:

\begin{Verbatim}[frame=single,label=SELECT Query,fontsize=\scriptsize]
 SELECT ?nameX ?nameY
 WHERE{ ?x foaf:knows ?y ; foaf:name ?nameX . ?y foaf:name ?nameY . }
\end{Verbatim}

\noindent
We associate to this SELECT query the following CONSTRUCT query:

\begin{Verbatim}[frame=single,label=CONSTRUCT Query,fontsize=\scriptsize]
 CONSTRUCT { _:r <http://example.org/nameX> ?nameX ; <http://example.org/nameY> ?nameY . }
 WHERE { ?x foaf:knows ?y ; foaf:name ?nameX . ?y foaf:name ?nameY . }
\end{Verbatim}

\noindent
Let us run this  CONSTRUCT query against the RDF graph $G$: 

\begin{Verbatim}[frame=single,label=$G$,fontsize=\scriptsize] 
 _:a foaf:name "Alice" ; foaf:knows _:b ; foaf:knows _:c .
 _:b foaf:name "Bob" . 
 _:c foaf:name "Cathy" . 
\end{Verbatim}

\noindent
The result is the RDF graph $H$: 

\begin{Verbatim}[frame=single,label=$H$,fontsize=\scriptsize] 
 _:l1 <http://example.org/nameX> "Alice" ; <http://example.org/nameY> "Bob" .
 _:l2 <http://example.org/nameX> "Alice" ; <http://example.org/nameY> "Bob" .
 _:l3 <http://example.org/nameX> "Alice" ; <http://example.org/nameY> "Cathy" .
\end{Verbatim}

\noindent
{F}rom the RDF graph $H$ we get the following table, 
by considering each blank \blank{$l_i$} in $H$
as the identifier of a line in the table. Note that the set of triples
in $H$ becomes a multiset of lines in the table.
This table is indeed the answer of the SPARQL SELECT query over $G$.

\begin{center}
  \scriptsize
  \begin{tabular}{|l|l|}
  \hline
\multicolumn{1}{|c|}{\;\textbf{nameX}\;} & 	\multicolumn{1}{|c|}{\;\textbf{nameY}\;} \\	
  \hline
\mbox{{\tt "Alice"}} & \mbox{{\tt "Bob"}} \\ 
  \hline
\mbox{{\tt "Alice"}} & \mbox{{\tt "Bob"}} \\ 
   \hline
\mbox{{\tt "Alice"}} & \mbox{{\tt "Cathy"}} \\ 
  \hline
  \end{tabular}
  \normalsize
\end{center}

\end{example}

In order to generalize Example~\ref{ex:select-construct} we have to
define a transformation from each SELECT query to a CONSTRUCT query 
and a transformation from the result of this CONSTRUCT query
to the result of the given SELECT query. 
For this purpose,
we first define \emph{relational} data graphs (Definition~\ref{def:select-data})
and \emph{relational} query graphs (Definition~\ref{def:select-query}). 

\begin{definition}
\label{def:select-data}
A \emph{relational data graph} on a finite set $\{s_1,...,s_n\}$ of
resource identifiers 
is a data graph made of triples $(\_:l_i,s_j,y_{i,j})$ 
where the $\_:l_i$'s are pairwise distinct blanks 
and the $y_{i,j}$'s are in $\IB$, 
for $j\in\{1,...,n\}$ and $i$ in some finite set $\{1,...,k\}$.
\end{definition}

\begin{proposition}
\label{prop:select-relation}
\ \\
Each relational data graph $S=\{(\_:l_i,s_j,y_{i,j})\}_{i\in\{1,...,k\},j\in\{1,...,n\}}$
determines a multirelation $\rel(S)=\{(y_{i,1},...,y_{i,n})\}_{i\in\{1,...,k\}}$
of arity $n$ on $\IB$.
\end{proposition}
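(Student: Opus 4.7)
\textbf{Proof plan for Proposition~\ref{prop:select-relation}.} The plan is to show that the construction $S \mapsto \rel(S)$ is well defined as a multirelation of arity $n$ on $\IB$, by reading off the tuples from $S$ row by row.

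First, I would observe that since the blanks $\_:l_1,\dots,\_:l_k$ are pairwise distinct (by Definition~\ref{def:select-data}) and the resource identifiers $s_1,\dots,s_n$ are distinct elements of a set, all $k\cdot n$ triples $(\_:l_i,s_j,y_{i,j})$ are pairwise distinct in $\IB^3$, so $S$ really does determine a function $(i,j)\mapsto y_{i,j}$ from $\{1,\dots,k\}\times\{1,\dots,n\}$ to $\IB$: given $S$ as a set of triples, the value $y_{i,j}$ is recovered unambiguously as the unique object of the triple in $S$ whose subject is $\_:l_i$ and whose predicate is $s_j$.

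Next, for each $i\in\{1,\dots,k\}$ I would form the tuple $t_i=(y_{i,1},\dots,y_{i,n})\in\IB^n$, using the previous step to see that $t_i$ is well defined independently of how $S$ is enumerated. Then $\rel(S)$ is defined as the multiset of these $t_i$'s, i.e., the function $\{1,\dots,k\}\to\IB^n$ sending $i$ to $t_i$, which by definition is a multirelation of arity $n$ on $\IB$. The multiplicity of a tuple $(z_1,\dots,z_n)\in\IB^n$ in $\rel(S)$ is the cardinality of $\{i\in\{1,\dots,k\}\mid y_{i,j}=z_j \text{ for all } j\}$.

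The only point that deserves attention, and which I expect to be the main (minor) subtlety rather than a real obstacle, is the transition from the set-valued graph $S$ to the multiset-valued $\rel(S)$: two different rows $i\neq i'$ may yield the same tuple $t_i=t_{i'}$ even though the underlying triples in $S$ are distinct (because the subjects $\_:l_i$ and $\_:l_{i'}$ differ). The indexing by the distinct blanks $\_:l_i$ is precisely what enforces the multiset semantics; suppressing this indexing and collapsing into a plain set would lose information. I would emphasize this point explicitly to justify that $\rel(S)$ is genuinely a multiset and not merely a subset of $\IB^n$.
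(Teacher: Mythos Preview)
Your proposal is correct and matches the paper's own approach, which simply states that the result is clear from the definitions of relational data graphs and multirelations. You have spelled out in detail exactly the two points that make this immediate: the pairwise distinctness of the subjects $\_:l_i$ and predicates $s_j$ ensures each $y_{i,j}$ is uniquely recoverable from $S$, and the indexing by the distinct blanks is what yields a genuine multiset rather than a set; the paper leaves both implicit.
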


\begin{example}
\label{ex:select-rel-data}
Here is a relational data graph on 
$\{\mbox{{\tt nameX}},\mbox{{\tt nameY}}\}$
with its corresponding multirelation: 

\smallskip
\noindent\hfill
\begin{minipage}{2in}
\begin{Verbatim}[frame=single,fontsize=\scriptsize] 
 _:l1 nameX "Alice" ; nameY "Bob" .
 _:l2 nameX "Alice" ; nameY "Cathy" .
 _:l3 nameX "Alice" ; nameY "Cathy" .
\end{Verbatim}
\end{minipage}
\hfill 
\scriptsize
\begin{tabular}{|l|l|}
  \hline
\multicolumn{1}{|c|}{\;\textbf{nameX}\;} & \multicolumn{1}{|c|}{\;\textbf{nameY}\;} \\	
  \hline
\mbox{{\tt "Alice"}} & \mbox{{\tt "Bob"}} \\ 
  \hline
\mbox{{\tt "Alice"}} & \mbox{{\tt "Cathy"}} \\ 
  \hline
\mbox{{\tt "Alice"}} & \mbox{{\tt "Cathy"}} \\ 
  \hline
\end{tabular}
\normalsize
\hfill \null

\end{example}

Assume that each variable in SPARQL is written as ``$?s$''
for some string $s$. 

\begin{definition}
\label{def:select-query}
The \emph{relational query graph} on a finite set of variables
$S=\{?s_1,...,?s_n\}$
is the query graph $\gr(S)$ made of the triples $(\_:r,s_j,?s_j)$
where $j\in \{1,...,n\}$ and $\_:r$ is a blank.
Note that $\gr(S)$ is uniquely determined by $S$
up to isomorphism in $\QGr{\IV}$.
\end{definition}

\begin{example}
\label{ex:select-rel-query}
Here is the relational query graph on
$\{\mbox{{\tt ?nameX}},\mbox{{\tt ?nameY}}\}$:

\begin{center}
\begin{minipage}{2in}
\begin{Verbatim}[frame=single,fontsize=\scriptsize] 
 _:r nameX ?nameX ; nameY ?nameY .
\end{Verbatim}
\end{minipage}
\end{center}

\end{example}

In the following we show how a basic select query can be encoded as a 
basic construct query (Definition~\ref{def:select-select})
and we prove that the result of the given select query is easily recovered
from the result of its associated construct query (Theorem~\ref{theo:select-sparql}).

\begin{definition}
\label{def:select-select}
A \emph{basic select query} is a pair $(L,S)$ where $L$ is a finite
query graph and $S$ is a finite set of variables 
such that each variable in $S$ occurs in $L$.
The \emph{basic construct query associated to} a basic select query $(L,S)$
is $(L,\gr(S))$ where $\gr(S)$ is the relational query graph on $S$. 
\end{definition}

\begin{proposition} 
\label{prop:select-result}
Let $(L,S)$ be a basic select query and $G$ a data graph.
The query result of $(L,\gr(S))$ against $G$ is a relational data graph $H$.
More precisely, let $S=\{?s_1,...,?s_n\}$ and let $m_1,...,m_k$ be
the matches from $L$ to $G$, then $H$ is the set of triples
$(\_:l_i,s_j,m_i(?s_j))$ where $i\in \{1,...,k\}$, $j\in \{1,...,n\}$,
and the blanks $\_:l_1,...,\_:l_k$ are pairwise distinct. 
\end{proposition}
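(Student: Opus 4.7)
The plan is to derive this essentially by unfolding Definition~\ref{def:construct-result} applied to the pair $(L,\gr(S))$, and then verifying by inspection that the result matches the description given in Definition~\ref{def:select-data}.

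First I would observe that $(L,\gr(S))$ really is a basic construct query: by Definition~\ref{def:select-query}, the only blank occurring in $\gr(S)$ is the single blank $\_:r$, which can be chosen disjoint from $|L|_\B$ (up to isomorphism in $\QGr{\IV}$), and the variables occurring in $\gr(S)$ are exactly $?s_1,\dots,?s_n$, which lie in $|L|_\V$ by the hypothesis that every variable in $S$ occurs in $L$. Thus the assumptions $|L|_\B\cap|\gr(S)|_\B=\emptyset$ and $|\gr(S)|_\V\subseteq|L|_\V$ of Definition~\ref{def:construct-rule} are satisfied.

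Next, I would apply Definition~\ref{def:construct-result} with $R=\gr(S)$. For each match $m_i:L\to G$ the local result $H_i$ is obtained from $\gr(S)$ by replacing every variable $?s_j$ by $m_i(?s_j)$ and every blank in $\gr(S)$ by a fresh blank. Since the only blank in $\gr(S)$ is $\_:r$, this single substitution introduces exactly one new blank, which I would name $\_:l_i$; the freshness convention guarantees that $\_:l_1,\dots,\_:l_k$ can be chosen pairwise distinct and distinct from all blanks in $G$. Substituting termwise in the generating triples $(\_:r,s_j,?s_j)$ of $\gr(S)$ yields
\[
H_i \;=\; \{\,(\_:l_i,\,s_j,\,m_i(?s_j))\;:\;j\in\{1,\dots,n\}\,\}.
\]
Taking the union $H=H_1\cup\cdots\cup H_k$ as prescribed by Definition~\ref{def:construct-result} gives precisely the set of triples claimed in the statement.

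Finally I would check that $H$ is a relational data graph in the sense of Definition~\ref{def:select-data}: its subjects are the pairwise distinct blanks $\_:l_i$, its predicates are the resource identifiers $s_j$, and its objects $m_i(?s_j)$ lie in $\IB$ because a match fixes $\I$ and sends the attributes of $L$ into $|G|\subseteq\IB$. There is no real obstacle here — the only subtle point is the bookkeeping on fresh blanks, namely that the ``new blank for each blank in $R$ and each $i$'' convention of Definition~\ref{def:construct-result} produces $k$ distinct blanks from the one blank $\_:r$ of $\gr(S)$, which is exactly what is needed to index the $k$ lines of the resulting multirelation.
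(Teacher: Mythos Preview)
Your proof is correct and takes essentially the same approach as the paper's own proof: both simply unfold Definition~\ref{def:construct-result} with $R=\gr(S)$, obtaining $H_i=\{(\_:l_i,s_j,m_i(?s_j))\}_j$ from the single blank $\_:r$ and the variables $?s_j$ in $\gr(S)$, and then take the union. Your version is more detailed in that you explicitly verify that $(L,\gr(S))$ is a basic construct query and that $H$ satisfies Definition~\ref{def:select-data}, whereas the paper leaves these checks implicit.
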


Because of Proposition~\ref{prop:select-result} we can state the
following definition. 

\begin{definition}
\label{def:select-result}
Let $(L,S)$ be a basic select query and $G$ a data graph.
Let $H$ be the query result of $(L,\gr(S))$ against $G$. 
The \emph{query result} of $(L,S)$ against $G$
is the multirelation $\rel(H)$ on $\IB$. 
\end{definition}
  
\begin{theorem}
\label{theo:select-sparql}
Let $L$ be a basic graph pattern of SPARQL
and $S=\{?s_1,...,?s_n\}$
a finite set of variables included in $|L|_\V$ and let $G$ be an RDF graph. 
Then the query result of $(L,R)$ against $G$ is the answer
of the SPARQL query ``SELECT $?s_1,...,?s_n$ WHERE \{$L$\}'' over $G$.  
\end{theorem}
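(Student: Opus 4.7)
The plan is to chain the explicit descriptions provided by Proposition~\ref{prop:select-result} and Proposition~\ref{prop:select-relation}, and then compare the resulting multirelation with the SPARQL specification. By Definition~\ref{def:select-result}, the query result of $(L,S)$ against $G$ equals $\rel(H)$, where $H$ is the query result of the associated basic construct query $(L,\gr(S))$ against $G$. Proposition~\ref{prop:select-result} describes $H$ explicitly as the relational data graph whose triples are $(\_:l_i,s_j,m_i(?s_j))$ for $i\in\{1,\ldots,k\}$ and $j\in\{1,\ldots,n\}$, where $m_1,\ldots,m_k$ enumerate the matches from $L$ to $G$ and the $\_:l_i$'s are pairwise distinct blanks. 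Proposition~\ref{prop:select-relation} then yields
\[
 \rel(H) \;=\; \{(m_i(?s_1),\ldots,m_i(?s_n))\}_{i\in\{1,\ldots,k\}},
\]
as a multiset of $n$-tuples in $\IB^n$ indexed by the matches of $L$ into $G$.

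On the SPARQL side, the answer to ``SELECT $?s_1,\ldots,?s_n$ WHERE $\{L\}$'' over $G$ is, by the SPARQL recommendation, the multiset obtained by first evaluating $L$ against $G$ as a multiset of solution mappings and then projecting each solution mapping onto the sequence $(?s_1,\ldots,?s_n)$. The core step is to identify the SPARQL solution mappings for $L$ against $G$ with the matches from $L$ to $G$ in the sense of Definition~\ref{def:graph-match}: both are maps that fix the resource identifiers of $\I$ and send every triple of $L$ to a triple of $G$. Under this identification, the projection of a solution mapping onto $S$ is exactly the tuple $(m(?s_1),\ldots,m(?s_n))$, so the two multirelations coincide.

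The main obstacle is pinning down this correspondence when $L$ contains blanks, since the conclusion cannot simply be obtained by invoking Theorem~\ref{theo:construct-sparql} (which assumes $|L|_\B=\emptyset$). SPARQL scopes blanks locally to $L$ and treats them as existential unknowns, so two solution mappings are considered distinct whenever they differ on a variable or on a blank, and these multiplicities are preserved by the projection onto $S$. In our framework each match of Definition~\ref{def:graph-match} is also determined by its values on both the variables and the blanks of $L$, so the bijection between matches and solution mappings is multiplicity-preserving. Once this is established, commutation with the projection onto $S$ is immediate from the definitions, completing the proof.
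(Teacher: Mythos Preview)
Your argument is correct and follows the same route as the paper: describe $H$ via Proposition~\ref{prop:select-result}, read off $\rel(H)$ via Proposition~\ref{prop:select-relation}, and then identify the matches $m_i$ with SPARQL's pattern matching so that projection onto $S$ yields the same multiset. The paper's own proof (in the appendix) actually restates the theorem with the extra hypothesis $|L|_\B=\emptyset$, which makes the bijection between matches and solution mappings immediate; your treatment of the general case is sound in substance, but note that in SPARQL terminology a \emph{solution mapping} has only variables in its domain (with multiplicity equal to the number of blank instantiations), so what you are calling ``solution mappings'' are really the \emph{pattern instance mappings}---the bijection you want is between matches and pattern instance mappings, which then induces the multiset equality after restricting to $S$.
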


\begin{remark}
\label{rem:select-null}
This Section can be generalized to multirelations with ``null'' values
(as in SQL) 
by allowing some missing triples in the definition of relational data graphs. 
\end{remark}


\section{Conclusion} 
\label{sec:discuss}

Relational algebra \cite{Codd_90} is the main mathematical
foundation underlying SQL-like formalism for databases.
However new frameworks such as RDF and SPARQL, where data structures are
represented as graphs, are better adapted 
to the needs of big data and web applications.
So, new mathematical
foundations are needed to cope with this change in data encodings, see
e.g., \cite{AnglesABHRV17,PerezAG09,KKC}.

In this paper, we make the bet to base our work entirely on algebraic
theories behind graphs and their transformations.  Suitable categories
of data graphs and query graphs are defined and the definition of
morphisms of query graphs clarifies the difference between blank nodes
and variables. Besides, we propose to encode CONSTRUCT and SELECT
queries as graph rewrite rules, of the form
$L \rightarrow L \cup R \leftarrow R$, and define their operational
semantics following a novel algebraic approach called POIM.  From the
proposed semantics, blanks in $L$ play the same role as variables and
thus can be replaced by variables, whereas blanks in $R$ are used for
creating new blanks in the result of a CONSTRUCT query.  As in
\cite{Kostylev_et_al2015}, we focus on the CONSTRUCT query form as the
fundamental query form.  In addition we propose a translation of the
SELECT queries as CONSTRUCT queries compatible with their operational
semantics. One of the benefits of using category theory is that
coding of data graphs as sets of triples is not that important. The
results we propose hold for all data models which define a category
with enough colimits. For intance, one may expect to define data graph
categories for the well-known Edge-labelled graphs or Property graphs
\cite{Robinson2013}.
The proposed operational semantics can clearly benefit from all
results regarding efficient graph matching implementation, see
e.g. \cite{FanLMWW10}. 

Among related works, a category of RDF-graphs as well as their
transformations have been proposed in \cite{BraatzB08}. The authors
defined  objects
of  RDF-Graph categories  of the form
$(G_{\mathrm{Blank}}, G_{\mathrm{Triple}})$ where $G_{\mathrm{Blank}}$
and $G_{\mathrm{Triple}}$ denote respectively the set of blank nodes
and the set of triples of graph $G$. This definition is clearly
different from ours (Definition~\ref{def:graph-data-query}). In
addition, the morphisms of such RDF-graphs associate blank nodes to
blank nodes which is not always the case in our approach. Associating
a blank node to any element of a triple is called instantiation in
\cite{BraatzB08}. The authors did not tackle the problem of answering
SPARQL queries but rather proposed an algebraic approach to transform
RDF-graphs. Their approach, called MPOC-PO, is inspired from DPO where
the first square is replaced by a ``minimal'' pushout complement
(MPOC). MPOC-PO drastically departs from the POIM transformations we
propose. This difference is quite natural since the two approaches
have different objectives~: the POIM approach is dedicated to
implement SPARQL queries while the MPOC-PO is intended to transform
RDF-graphs in general. However, MPOC-PO and DPO approaches are clearly
not tailored to implement CONSTRUCT or SELECT queries since the
(minimal) pushout complements always include parts of large data
graphs which are not matched by the queries while such parts are not
involved in the query answers.

In \cite{AJD2015}, even if the authors use a categorical setting,
their objectives and results depart from ours as they mainly encode
every ontology as a category. However, Graph Transformations have
already been used in modeling relational databases, see
e.g. \cite{AndriesE96} where a visual and textual hybrid query
language has been proposed. In \cite{KieselSW95}, the main features of
a data management system based on graphs have been proposed where the
underlying typed attributed data graphs are different from those of
RDF and SPARQL. In \cite{AlqahtaniH18}, triple graph grammars (TGG)
have also been used for data modelling and model transformation rules
to be compiled into Graph Data Bases code for execution.

In this paper we consider basic graphs and queries, which form 
a significant kernel of RDF and SPARQL. 
Future work includes the generalization of the present work to other 
features of RDF and SPARQL in order to encompass general SPARQL queries.
We also consider studying RDF Schema \cite{rdfs} and ontologies
from this point of view.


\appendix
\section{Proofs}
\label{sec:appendix-proofs}

First let us prove the results about colimits and \poim
transformations in the category $\tgr{C}{A}$.
It can be helpful to remember that a morphism $a:T\to T'$ in $\tgr{}{A}$
and a map $M:|T|\to|T'|$ are such that $M(x)=|a|(x)$ for each attribute
$x\in |T|$ if and only if $M^3(t)=a(t)$ for each triple $t\in T$. 

\setcounterref{theorem}{prop:graph-coprod}
\addtocounter{theorem}{-1}
\begin{proposition}  
Given graphs $T_1,...,T_k$ on $A$
such that $|T_i|\cap|T_j|\subseteq C$ for each $i\ne j$,
the union $T_1 \cup ...\cup T_k$ is a coproduct of $T_1,...,T_k$
in $\tgr{C}{A}$.
\end{proposition}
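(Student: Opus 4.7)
The plan is to verify directly that $T=T_1\cup\ldots\cup T_k$, equipped with the inclusion morphisms $\iota_i:T_i\hookrightarrow T$, satisfies the universal property of a coproduct in $\tgr{C}{A}$. First I would observe that each $\iota_i$ is indeed a morphism of $\tgr{C}{A}$: its attribute map $|\iota_i|:|T_i|\hookrightarrow |T|$ is the set-theoretic inclusion $|T_i|\subseteq|T_1|\cup\ldots\cup|T_k|=|T|$, which obviously fixes $C$ and whose cube restricts to the inclusion $T_i\subseteq T$.

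Next, given any cocone $(a_i:T_i\to T')_{i=1,\ldots,k}$ in $\tgr{C}{A}$, I would define the mediating morphism $a:T\to T'$ by specifying its attribute map $M:|T|\to|T'|$ on the cover $|T|=|T_1|\cup\ldots\cup|T_k|$: for $x\in|T_i|$ set $M(x)=|a_i|(x)$. The key verification is that this is well-defined on overlaps. If $x\in|T_i|\cap|T_j|$ with $i\neq j$, the hypothesis gives $x\in C$; since each $a_i$ and $a_j$ fixes $C$, both $|a_i|(x)$ and $|a_j|(x)$ equal $x$. This is exactly the spot where the disjointness-modulo-$C$ assumption is used, and it is the only delicate step in the argument.

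Then I would check the three remaining conditions: (i) $M^3(T)\subseteq T'$, which follows because every triple $t\in T$ lies in some $T_i$, and there $M^3(t)=|a_i|^3(t)=a_i(t)\in T'$; (ii) $M$ fixes $C$, since any $x\in|T|\cap C$ lies in some $|T_i|\cap C$ and $|a_i|$ fixes $C$; (iii) $a\circ\iota_i=a_i$ for each $i$, immediate from the definition of $M$ on $|T_i|$. Finally, uniqueness: any $b:T\to T'$ in $\tgr{C}{A}$ satisfying $b\circ\iota_i=a_i$ must have $|b|$ agreeing with $|a_i|$ on $|T_i|$ for every $i$, hence $|b|=M=|a|$, hence $b=a$. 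The main (and only) obstacle is the well-definedness on overlaps, which is precisely what the hypothesis $|T_i|\cap|T_j|\subseteq C$ was designed to ensure.
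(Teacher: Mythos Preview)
Your proof is correct and follows essentially the same approach as the paper's: both define the mediating attribute map by $M(x)=|a_i|(x)$ on each $|T_i|$ and use the hypothesis $|T_i|\cap|T_j|\subseteq C$ together with the fact that each $a_i$ fixes $C$ to resolve the overlap ambiguity. The only cosmetic difference is that the paper phrases the definition of $M$ via the disjoint decomposition $|T|=\bigl(\bigcup_i(|T_i|\setminus C)\bigr)\cup(|T|\cap C)$, whereas you check well-definedness directly on pairwise overlaps; the content is the same.
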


\begin{proof}
Consider morphisms $a_i:T_i\to T$ in $\tgr{C}{A}$ for $i=1,...,k$
and the maps $|a_i|:|T_i|\to |T|$. Note that $|T_1 \cup ...\cup T_k| =
|T_1| \cup ...\cup |T_k|$ and that $|T_1| \cup ...\cup |T_k|$
is the disjoint union of the sets $|T_i|\!\setminus\! C$ for $i=1,...,k$ and
$(|T_1| \cup ...\cup |T_k|)\cap C$, because of the assumption
$|T_i|\cap|T_j|\subseteq C$ for each $i\ne j$.
Thus we can define a map $M:|T_1 \cup ...\cup T_k|\to |T|$ by: 
$M(x)=|a_i|(x)$ for each $i$ and each $x\in |T_i|\!\setminus\! C$
and $M(x)=x$ for each $x\in (|T_1| \cup ...\cup |T_k|)\cap C$.
Then $M$ coincides with $|a_i|$ on $|T_i|$ for each $i$.
Thus for each $t\in T_i$ we have $M^3(t)=a_i(t)$,
which proves that the image of $T_1 \cup ...\cup T_k$ by $M^3$ is in $T$
and that the restriction of $M^3$ defines a morphism
$a:T_1 \cup ...\cup T_k\to T$ in $\tgr{C}{A}$ which
coincides with $a_i$ on $T_i$ for each~$i$.
Unicity is clear. 
\end{proof} 

\setcounterref{theorem}{prop:graph-po}
\addtocounter{theorem}{-1}
\begin{proposition}  
Let $l:L\to K$ and $m:L\to G$ be morphisms of graphs on $A$
such that $K$ is finite, $l$ is an inclusion and $m$ fixes $C$. 
Let us assume that $|G| \cap |K| \subseteq C$
(this is always possible up to isomorphism in $\tgr{C}{A}$,
by Remark~\ref{rem:graph-iso}).
Let $N:|K|\to A$ be such that $N(x)=|m|(x)$ for $x\in |L|$ 
and $N(x)=x$ otherwise. 
Let $D=G\cup N^3(K)$, let $n:K\to D$ be the restriction of $N^3$
and $g:G\to D$ the inclusion.
Then $|D|=|G|\cup |K\!\setminus\! L|$ and
the square $(l,m,n,g)$ is a pushout square in $\tgr{C}{A}$.
\end{proposition}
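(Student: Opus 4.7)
My plan is to verify four things in order: (i) $g$ and $n$ are well-defined morphisms of $\tgr{C}{A}$, (ii) the square $(l,m,n,g)$ commutes, (iii) $|D|=|G|\cup(|K|\setminus|L|)$, and (iv) the square satisfies the pushout universal property. Steps (i)--(iii) are routine checks; the main effort goes into (iv).

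For (i), $g:G\to D$ is an inclusion, so it is automatically a morphism that fixes $C$. For $n:K\to D$, the image $N^3(K)$ is contained in $D$ by construction, so $n$ is well defined with $|n|=N$; the fact that $n$ fixes $C$ follows from the fact that $m$ does: for $x\in|K|\cap C$, either $x\in|L|$ and then $N(x)=|m|(x)=x$, or $x\notin|L|$ and then $N(x)=x$ by definition. For (ii), since $l$ is an inclusion, the two compositions $n\circ l$ and $g\circ m$ both equal the restriction of $|m|^3$ to $L$, because $N$ coincides with $|m|$ on $|L|$. For (iii), I compute $|D|=|G|\cup N(|K|)$ and split $|K|=|L|\cup(|K|\setminus|L|)$; then $N(|L|)=|m|(|L|)\subseteq|G|$ and $N(|K|\setminus|L|)=|K|\setminus|L|$, which yields the claimed formula.

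For (iv), assume $a:K\to T$ and $b:G\to T$ are morphisms in $\tgr{C}{A}$ with $a\circ l=b\circ m$. I construct $u:D\to T$ at the attribute level by defining $|u|:|D|\to|T|$ piecewise: $|u|(x)=|b|(x)$ for $x\in|G|$ and $|u|(x)=|a|(x)$ for $x\in|K|\setminus|L|$. These two clauses must be consistent on the overlap $|G|\cap(|K|\setminus|L|)$, and this is where the hypothesis $|G|\cap|K|\subseteq C$ is used: the overlap lies in $C$, and both $|a|$ and $|b|$ act as the identity on $C$, so the definitions agree. To check that $u$ is a morphism I verify $|u|\circ N=|a|$ on $|K|$: on $|L|$ this reduces to $|b|\circ|m|=|a|$, which is the attribute-level content of $a\circ l=b\circ m$ together with $l$ being an inclusion; on $|K|\setminus|L|$ it is immediate. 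Consequently $|u|^3(D)=b(G)\cup a(K)\subseteq T$, so $u$ is a morphism, and it fixes $C$ because $a$ and $b$ do. Commutativity $u\circ g=b$ and $u\circ n=a$ is built into the construction, and uniqueness is forced: any candidate $u'$ must satisfy $|u'|(x)=|b|(x)$ on $|G|$ and $|u'|(x)=|a|(x)$ on $|K|\setminus|L|$, hence must agree with $u$ on all of $|D|$.

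The main obstacle I anticipate is mostly notational bookkeeping: carefully distinguishing a morphism $a:T\to T'$ of graphs from its induced attribute map $|a|:|T|\to|T'|$, and checking that each set-theoretic definition lifts correctly to triples via cubing. The conceptual heart is recognizing that $|G|\cap|K|\subseteq C$ is precisely the hypothesis that makes the piecewise definition of $|u|$ consistent, so this assumption cannot be dropped.
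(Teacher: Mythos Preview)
Your proof is correct and follows essentially the same line as the paper's: verify commutativity, compute $|D|$ explicitly, and then build the mediating morphism piecewise on attributes via the decomposition of $|D|$, using $|G|\cap|K|\subseteq C$ to guarantee consistency on the overlap. The only differences are cosmetic (the paper names your $a$ and $b$ the other way around), and your use of $|K|\setminus|L|$ in place of the paper's $|K\setminus L|$ is actually a small improvement---the paper's step $N(|K\setminus L|)=|K\setminus L|$ is not quite right when $|K\setminus L|$ meets $|L|$, whereas your formula $|D|=|G|\cup(|K|\setminus|L|)$ is the correct one and is exactly what the pushout argument needs.
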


\begin{proof}
{F}rom $D=G\cup N^3(K)$ we get $|D|=|G|\cup |N^3(K)|$, and since
$|N^3(K)|=N(|K|)=N(|L|\cup|K\!\setminus\! L|)=N(|L|)\cup N(|K\!\setminus\! L|)
=|m|(|L|)\cup |K\!\setminus\! L|$ with $|m|(|L|)\subseteq |G|$
we get $|D|=|G|\cup |K\!\setminus\! L|$. 
The definition of $n$ implies that $g\circ m = n\circ l$. 
Now let $a:G\to T$ and $b:K\to T$ be any morphisms in $\tgr{C}{A}$ such that
$a\circ m = b\circ l$. 
First, let us focus on attributes.
We have $|g|\circ |m| = |n|\circ |l|$ and $|a|\circ |m| = |b|\circ |l|$. 
Since $|G| \cap |K\!\setminus\! L| \subseteq C$ we have $|a|(x)=|b|(x)=x$
for each $x\in |G| \cap |K\!\setminus\! L|$. 
Since $|D|=|G|\cup |K\!\setminus\! L|$ there is a unique map 
$F:|D|\to |T|$ such that $F(x)=|a|(x)$ for $x\in |G|$ and
$F(x)=|b|(x)$ for $x\in |K\!\setminus\! L|$.
Thus on the one hand $F(|g|(x)) = F(x) =|a|(x)$ for each $x\in |G|$, so that
$F\circ |g|=|a|$.
And on the other hand for each $x\in |K|$, if $x\in |L|$ then 
$F(|n|(x)) = F(|m|(x)) = |a|(|m|(x)) = |b|(|l|(x))=|b|(x)$,
otherwise $F(|n|(x)) = F(x) = |b|(x)$, so that $F\circ |n|=|b|$. 
Second, let us consider triples. 
Since $D=G\cup N^3(K)$ and $F^3(G) =a(G) $ and $F^3(N^3(K)) = F^3(n(K))= b(K)$
we get $F^3(D) \subseteq T$, which means that there is a morphism
$f:D\to T$ of graphs on $A$ such that $|f|=F$, $f\circ g=a$ and $f\circ n=b$.
Unicity is clear. 
\end{proof}

\setcounterref{theorem}{prop:construct-poim}
\addtocounter{theorem}{-1}
\begin{proposition}
Let $(L,R)$ be a basic construct query and $m:L\to G$ a match. 
Let $P:|R|\to A$ be defined by $P(x)=|m|(x)$ for $x\in|R|_\V$ and $P(x)=x$
otherwise.
Then, up to isomorphism in $\QGr{\I}$, 
the result of applying $\fpoim_{L,R}$ to $m$ is $p:R\to H$
where $H=P^3(R)$ and $p$ is the restriction of $P^3$. 
\end{proposition}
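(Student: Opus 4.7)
The plan is to unfold the definition of $\fpoim_{L,R}=r^+\circ l_*$ and apply Proposition~\ref{prop:graph-po} to compute the pushout step explicitly, then read off the image factorization and compare the resulting map on $|R|$ with the map $P$. First, up to isomorphism in $\QGr{\I}$, I may rename the blanks of $R$ (and thereby of $K$) so that $|K|\cap|G|\subseteq\I$; this is legitimate by Remark~\ref{rem:graph-iso} and is what the ``up to isomorphism in $\QGr{\I}$'' qualifier in the statement authorises.

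Under this assumption, Proposition~\ref{prop:graph-po} applied to $l:L\to K$ and $m:L\to G$ in $\QGr{\I}$ produces the pushout $n=l_*(m):K\to D$, where $D=G\cup N^3(K)$ and $N:|K|\to A$ is given by $N(x)=|m|(x)$ for $x\in|L|$ and $N(x)=x$ otherwise, with $n$ the restriction of $N^3$. The image factorization along $r:R\to K$ (which is the inclusion $R\subseteq K$) then yields $H=n(R)=N^3(R)$ and the morphism $p=r^+(n):R\to H$ equal to the restriction of $N^3$ to $R$. Hence, to finish the proof, it suffices to show that $N$ and $P$ agree on $|R|$, for then $N^3(R)=P^3(R)=H$ and $p$ is the restriction of $P^3$ as required.

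The equality $N|_{|R|}=P$ follows by case analysis on an attribute $x\in|R|=|R|_\I\cup|R|_\B\cup|R|_\V$. If $x\in|R|_\V$, then $x\in|L|_\V$ by the standing assumption $|R|_\V\subseteq|L|_\V$, so $N(x)=|m|(x)=P(x)$. If $x\in|R|_\B$, then $x\notin|L|$ because $|L|_\B\cap|R|_\B=\emptyset$ and the sets $\I,\B,\V$ are pairwise disjoint, so $N(x)=x=P(x)$. If $x\in|R|_\I$, then $x\notin|R|_\V$, so $P(x)=x$; moreover $x\in\I\subseteq C$ and $m$ fixes $\I$, so $N(x)=|m|(x)=x$ if $x\in|L|$ and $N(x)=x$ otherwise, in both subcases matching $P(x)$. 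The main obstacle is simply to keep the bookkeeping of the three flavours of attributes straight and to justify the renaming of $R$'s blanks at the outset; once that is in place the identification $n|_R=P^3|_R$ is immediate from the explicit pushout formula.
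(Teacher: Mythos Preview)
Your proof is correct and follows essentially the same route as the paper: invoke the explicit pushout formula of Proposition~\ref{prop:graph-po} and then check, by a case split on $|R|_\I$, $|R|_\B$, $|R|_\V$, that $N$ restricted to $|R|$ coincides with $P$. One small slip: to guarantee $|K|\cap|G|\subseteq\I$ you must also rename the blanks of $L$, not only those of $R$, since $|K|_\B=|L|_\B\cup|R|_\B$; the paper's proof explicitly assumes all blanks in $L$ \emph{or} $R$ are distinct from those in $G$.
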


\begin{proof}
We use the notations of Diagram~(\ref{diag:poim}).
Up to isomorphism in $\QGr{\I}$ we can assume that all blanks in $L$ or in $R$
are distinct from the blanks in $G$. Then $|G| \cap |K| \subseteq C$,
so that by Proposition~\ref{prop:graph-po} 
the data graph $D$ is $D=G\cup n(K)$ where $n$ is such that
$|n|(x)=|m|(x)$ for $x\in |L|$ and $|n|(x)=x$ otherwise.
It follows that the restriction of $n$ to $R$ is such that
$|n|(x)=|m|(x)$ for $x\in |L|\cap|R|$ and $|n|(x)=x$ otherwise.
Note that $|L|\cap |R|$ is the disjoint union of
$|L|_\I\cap|R|_\I$, that is fixed by all morphisms in $\QGr{\I}$, 
and $|L|_\V\cap|R|_\V$, with $|L|_\V\cap|R|_\V=|R|_\V$
since $|R|_\V\subseteq|L|_\V$.
Thus the restriction of $n$ to $R$ is such that
$|n|(x)=|m|(x)$ for $x\in |R|_\V$ and $|n|(x)=x$ otherwise.
The result follows. 
\end{proof}

Now let us consider the basic construct queries.
The semantics of SPARQL CONSTRUCT queries is defined in 
\cite[Section~5]{Kostylev_et_al2015},
based on the seminal paper \cite{PerezAG09}. 
In order to express this definition we have to introduce some terminology and
notations. 
Note that in \cite{Kostylev_et_al2015} 
literals are allowed as subjects or predicates in RDF graphs.
However for our purpose this does not matter, so that
we stick to the ``official'' definition of an RDF graph from \cite{rdf}. 
Note that for each subset $T$ of $(\IBV)^3$ and each subset $X$ of $|T|$, 
each map $f:X\to \IBV$ gives rise to a map $f':|T|\to \IBV$
such that $f'(x)=f(x)$ when $x\in X$ and $f'(x)=x$ otherwise, 
then $f':|T|\to \IBV$ gives rise to $f'':T\to (\IBV)^3$ which is the
restriction of $(f')^3$ to $T$. There will not be any ambiguity in denoting
$f$ not only the given $f$ but also its extensions $f'$ and $f''$,
so that we can state the following definitions from
\cite{Kostylev_et_al2015}. 
For simplicity we consider only the SPARQL queries ``CONSTRUCT $\{R\}$
WHERE $\{L\}$'' such that each variable in $R$ occurs in $L$.
Indeed, variables outside $|L|_\V$ cannot be instanciated in the result,
and according to \cite[Section~16.2]{sparql}, 
if a triple contains an unbound variable,
then that triple is not included in the output RDF graph.
Thus, triples involving a variable in $|R|_\V\!\setminus\!|L|_\V$,
if any, can be dropped. 
It is assumed in \cite{Kostylev_et_al2015} that there is no blank in $L$.
Indeed, since blank nodes in graph patterns act as variables,
each blank in $L$ can be replaced by a new variable.
A \emph{solution mapping} (or simply a \emph{mapping}) from a basic graph
pattern $L$ to an RDF graph $G$ is a map $\mu:|L|_\V\to \IB$ such that
$\mu(L)\subseteq G$.
When $L$ and $R$ are basic graph patterns such that $|R|_\V\subseteq|L|_\V$,
the \emph{answer} of the SPARQL query ``CONSTRUCT $\{R\}$ WHERE $\{L\}$''
over an RDF graph $G$ 
is the set of all well-formed triples $\mu(f_\mu(t))$ 
for all triples $t\in R$ and all mappings $\mu$ from $L$ to $G$, 
where for each $\mu$ a map $f_\mu:|R|_\B\to \B$ is chosen in such a way that the
subsets $f_\mu(|R|_\B)$ of $\B$ are pairwise distinct and all of them are
distinct from $|G|_\B$.

\setcounterref{theorem}{theo:construct-sparql}
\addtocounter{theorem}{-1}
\begin{theorem}
Let $L$ and $R$ be basic graph patterns with $|L|_\B=\emptyset$
and $|R|_\V\subseteq |L|_\V$.
Then $(L,R)$ is a basic construct query and 
the set of well-formed triples in the query result of applying $(L,R)$
to an RDF graph $G$ is isomorphic in $\DGr{\I}$ to the answer
of the SPARQL query ``CONSTRUCT $\{R\}$ WHERE $\{L\}$'' over $G$.  
\end{theorem}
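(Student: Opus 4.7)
The plan is to unfold both sides of the claimed isomorphism into explicit triple descriptions indexed by the matches $m_i : L \to G$ (equivalently, by the SPARQL solution mappings $\mu : |L|_\V \to \IB$), and then to produce a bijection of blanks fixing every resource identifier. Verifying that $(L,R)$ is a basic construct query is immediate: $|R|_\V \subseteq |L|_\V$ is in the hypothesis, and $|L|_\B \cap |R|_\B = \emptyset$ follows from $|L|_\B = \emptyset$.

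First I would establish a canonical bijection between matches $m : L \to G$ in $\QGr{\I}$ and SPARQL solution mappings $\mu : |L|_\V \to \IB$ with $\mu(L) \subseteq G$. By Definition~\ref{def:graph-graph} a match $m$ is determined by the map $|m| : |L| \to |G|$, and since $m$ fixes $\I$ and $|L|_\B = \emptyset$, the map $|m|$ is fully determined by its restriction to $|L|_\V$, which takes values in $\IB$; the image condition $m(L) \subseteq G$ is the same on both sides. Hence the matches $m_1, \ldots, m_k$ of Definition~\ref{def:construct-result} correspond exactly to the mappings enumerated in the SPARQL semantics recalled just before the theorem.

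Next, for each match $m_i$ corresponding to a mapping $\mu_i$, the graph $H_i$ of Definition~\ref{def:construct-result} is obtained by substituting every variable $x$ of $R$ by $\mu_i(x)$ and every blank of $R$ by a fresh blank, which I encode as a map $f_i : |R|_\B \to \B$. The definition permits the family $(f_i)_i$ to be chosen so that the sets $f_i(|R|_\B)$ are pairwise disjoint and disjoint from $|G|_\B$, matching exactly the freedom used to pick the maps $f_\mu$ on the SPARQL side. With these choices, $H_i$ equals $\{\mu_i(f_i(t)) : t \in R\}$ as a set of triples on $\IB$, so the set of well-formed triples of $H = H_1 \cup \cdots \cup H_k$ coincides on the nose with the SPARQL answer.

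The isomorphism in $\DGr{\I}$ then simply records that both descriptions are specified only up to a renaming of the fresh blanks: two legal choices of the families $(f_i)_i$ (or equivalently, of the $f_\mu$) yield data graphs that differ by a bijection on blanks fixing $\I$, which is precisely an isomorphism in $\DGr{\I}$ by the characterization of isomorphisms given after Definition~\ref{def:graph-graph}. The main obstacle, such as it is, lies in this blank bookkeeping: one must check that the global freshness requirement in the SPARQL definition is compatible with, and no stronger than, the per-match freshness in Definition~\ref{def:construct-result}. Since $R$ is finite, there are finitely many matches, and $\B$ is countably infinite (Assumption~\ref{ass:graph}), such a coherent choice always exists, completing the comparison.
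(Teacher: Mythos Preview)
Your proposal is correct and follows essentially the same approach as the paper's own proof: both reduce the claim to the observation that, because $|L|_\B=\emptyset$, the matches $m:L\to G$ in $\QGr{\I}$ are in bijection with the SPARQL solution mappings $\mu:|L|_\V\to\IB$, after which the two definitions of the result unfold to the same union of substituted copies of $R$ up to the choice of fresh blanks. You are in fact more explicit than the paper about the blank-freshness bookkeeping and about why two legal choices of the $f_i$ (resp.\ $f_\mu$) yield graphs isomorphic in $\DGr{\I}$; the paper's proof leaves this implicit.
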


\begin{proof}
Clearly $(L,R)$ is a basic construct query and $|G|_\B\cap |L|_\B=\emptyset$. 
We can assume without loss of generality that $|G|_\B\cap |R|_\B=\emptyset$. 
The query result $H$ of applying $(L,R)$ to $G$ is given by 
Definition~\ref{def:construct-result}, as reminded now. 
Let $m_i$ ($i=1,...,k$) be the matches from $L$ to $G$ and 
for each $i$ let $H_i$ be the data graph obtained from $R$
by replacing each variable $x$ in $R$ by $m_i(x)$
and each blank in $R$ by a new blank, then $H=H_1 \cup ...\cup H_k$.
The Theorem now follows from the remark that the maps $\mu''$ on triples
which are associated (as above) to the mappings $\mu$ are precisely
the matches from $L$ to $G$.
\end{proof}

\setcounterref{theorem}{prop:construct-high}
\addtocounter{theorem}{-1}
\begin{proposition}
Let $(L,R)$ be a basic construct query and $G$ a data graph.
The high-level query result of $(L,R)$ against $G$ is isomorphic, 
in the category $\DGr{\I}$, to the query result of $(L,R)$ against $G$.
\end{proposition}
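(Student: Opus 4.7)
The plan is to unfold the definition of the high-level calculus, apply Proposition~\ref{prop:construct-poim} to obtain an explicit description of $H_\high$, and then compare this description term-by-term with Definition~\ref{def:construct-result}. Throughout, I will identify copies of $L$, $K$, $R$ inside $k\,L$, $k\,K$, $k\,R$ and track what the map $P$ of Proposition~\ref{prop:construct-poim} does on each copy.

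First, by the construction preceding Definition~\ref{def:construct-high}, we can write $k\,L = L_1 \cup \dots \cup L_k$, $k\,R = R_1 \cup \dots \cup R_k$ and $k\,K = L_1 \cup R_1 \cup \dots \cup L_k \cup R_k$, where each $L_i$, $K_i=L_i\cup R_i$, $R_i$ is a copy of $L$, $K$, $R$ and the non-$\I$ attributes of these copies are pairwise disjoint. By Proposition~\ref{prop:graph-coprod} this is a coproduct in $\QGr{\I}$, so the match $m:k\,L\to G$ that restricts to $m_i$ on each $L_i$ exists and is unique. Up to isomorphism in $\QGr{\I}$ (using Remark~\ref{rem:graph-iso}) we may further assume that $|G|_\B$ is disjoint from $|k\,R|_\B$ and from $|k\,L|_\B$; this is exactly the freeness assumption used in Definition~\ref{def:construct-result}.

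Second, apply Proposition~\ref{prop:construct-poim} to $\fpoim_{k\,L,k\,R}(m)$. It yields $H_\high = P^3(k\,R)$, where $P:|k\,R|\to\IB$ is defined by $P(x)=|m|(x)$ for $x\in |k\,R|_\V$ and $P(x)=x$ otherwise. Restrict $P$ to each copy $R_i$. The variables in $R_i$ are (by the renaming that built $k\,R$) a disjoint copy of $|R|_\V\subseteq|L|_\V$, and they sit inside $L_i$; since $m$ agrees with $m_i$ on $L_i$, the restriction $P|_{R_i}$ sends each variable of $R_i$ to its $m_i$-image in $|G|$ and fixes every blank of $R_i$. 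The blanks of $R_i$ are pairwise disjoint across $i$ and disjoint from $|G|_\B$, hence they are genuinely fresh. This is precisely the recipe for $H_i$ in Definition~\ref{def:construct-result}, so $P^3(R_i)$ is isomorphic to $H_i$ in $\DGr{\I}$.

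Third, compute the union. Since the $R_i$ are pairwise disjoint in their non-$\I$ attributes and $P$ fixes blanks and sends variables into $|G|\subseteq \IB$, the images $P^3(R_i)$ also have pairwise disjoint non-$\I$ blank sets, and they share only attributes of $\I$ or blanks of $|G|_\B$ (which are absent from the $R_i$). Therefore
\[
H_\high \;=\; P^3(k\,R) \;=\; P^3(R_1)\cup\dots\cup P^3(R_k)
\]
is isomorphic in $\DGr{\I}$ to $H_1\cup\dots\cup H_k = H$, which is the query result of Definition~\ref{def:construct-result}.

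The main delicate point is bookkeeping of the blanks: one must ensure that the fresh blanks introduced inside each $R_i$ are distinct across $i$ and distinct from $|G|_\B$, so that the single ``big'' pushout-and-image performed at the high level produces no accidental identifications between the $H_i$. This is exactly what is secured by choosing the representatives of $k\,R$ and of $G$ up to isomorphism in $\QGr{\I}$ via Remark~\ref{rem:graph-iso} at the very beginning.
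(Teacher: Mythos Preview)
Your proof is correct and follows the same approach as the paper: invoke Proposition~\ref{prop:construct-poim} to describe $H_\high$ explicitly and compare with Definition~\ref{def:construct-result}. The paper's proof is a one-line appeal to Proposition~\ref{prop:construct-poim}; you have written out the bookkeeping (tracking the copies $R_i$, the action of $P$ on each, and the disjointness of fresh blanks) that this appeal tacitly relies on.
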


\begin{proof}
This is a consequence of the description of the result of a
\poim transformation from Proposition~\ref{prop:construct-poim}.
\end{proof}

\setcounterref{theorem}{prop:construct-low} 
\addtocounter{theorem}{-1}
\begin{proposition}
Let $(L,R)$ be a basic construct query and $G$ a data graph.
The low-level query result of $(L,R)$ against $G$ is isomorphic, 
in the category $\DGr{\I}$, to the query result of $(L,R)$ against $G$.
\end{proposition}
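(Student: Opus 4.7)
The plan is to combine Proposition~\ref{prop:construct-poim} (which gives an explicit description of each local result $H_i$) with Proposition~\ref{prop:graph-coprod} (which gives an explicit description of the coproduct in $\DGr{\IB(G)}$) and then match the outcome term by term against Definition~\ref{def:construct-result}.

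First I would apply Proposition~\ref{prop:construct-poim} to each match $m_i:L\to G_i$. This gives, up to isomorphism in $\QGr{\I}$, the local result $H_i=P_i^3(R)$, where $P_i:|R|\to \IB$ is defined by $P_i(x)=|m_i|(x)$ for $x\in|R|_\V$ and $P_i(x)=x$ otherwise. In particular, the variables of $R$ are replaced by their images under $m_i$ (which lie in $|G|\subseteq\IB(G)$ since $m_i$ factors through $G$), whereas the identifiers in $|R|_\I$ and the blanks in $|R|_\B$ are kept unchanged. Using the working assumption $|G|_\B\cap|R|_\B=\emptyset$ from Definition~\ref{def:construct-result}, it follows that the set of attributes of $H_i$ lying outside $\IB(G)=\I\cup|G|_\B$ is exactly $|R|_\B$.

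Next I would compute the coproduct of $H_1,\ldots,H_k$ in $\DGr{\IB(G)}$ via Proposition~\ref{prop:graph-coprod}. To do this, Remark~\ref{rem:graph-iso} allows us to replace each $H_i$ by an isomorphic copy $H'_i$ in $\DGr{\IB(G)}$, obtained by substituting each attribute in $|H_i|\setminus\IB(G)=|R|_\B$ by a fresh blank chosen so that $|H'_i|\cap|H'_j|\subseteq\IB(G)$ for all $i\ne j$. By the preceding paragraph, this amounts precisely to replacing each blank of $R$ by a new blank, one per pair $(b,i)$ with $b\in|R|_\B$ and $i\in\{1,\ldots,k\}$, while keeping the images of variables and the identifiers untouched. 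Proposition~\ref{prop:graph-coprod} then identifies the coproduct with the union $H_\low=H'_1\cup\cdots\cup H'_k$.

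Comparing with Definition~\ref{def:construct-result}, I observe that $H'_i$ is exactly the data graph obtained from $R$ by replacing each variable $x$ by $|m_i|(x)$ and each blank by a new blank (pairwise distinct across $i$), and that $H_\low=H'_1\cup\cdots\cup H'_k$ coincides with $H=H_1\cup\cdots\cup H_k$ of Definition~\ref{def:construct-result}. Since the coproduct in $\DGr{\IB(G)}$ is defined only up to isomorphism in $\DGr{\IB(G)}$, and any such isomorphism is a fortiori an isomorphism in $\DGr{\I}$ (as $\I\subseteq\IB(G)$), this yields the required isomorphism $H_\low\cong H$ in $\DGr{\I}$. The main bookkeeping obstacle is keeping track of which attributes are forced to coincide across the $H_i$'s (the blanks of $G$ that appear as images of variables under several matches) and which must be separated (the blanks of $R$); this is exactly what the choice of base category $\DGr{\IB(G)}$ for the coproduct was designed to handle.
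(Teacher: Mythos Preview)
Your proof is correct and follows the same approach as the paper: the paper's own proof simply states that the result is a consequence of Proposition~\ref{prop:construct-poim} (the explicit description of a \poim transformation) together with Proposition~\ref{prop:graph-coprod} (the explicit description of coproducts), which is exactly the two-step argument you carry out in detail. Your added bookkeeping about which attributes lie in $\IB(G)$ versus $|R|_\B$, and the observation that an isomorphism in $\DGr{\IB(G)}$ is a fortiori one in $\DGr{\I}$, fills in precisely the details the paper leaves to the reader.
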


\begin{proof}
This is a consequence of the description of the result of a
\poim transformation from Proposition~\ref{prop:construct-poim}
and the description of coproducts in $\DGr{\I}$
from Proposition~\ref{prop:graph-coprod}.
\end{proof}

Finally let us look at the basic select queries.
For the semantics of SPARQL SELECT queries we rely on \cite[Section~2]{KKC}. 

\setcounterref{theorem}{prop:select-relation}
\addtocounter{theorem}{-1}
\begin{proposition}
Each relational data graph $S=\{(\_:l_i,s_j,y_{i,j})\}_{i\in\{1,...,k\},j\in\{1,...,n\}}$
determines a multirelation $\rel(S)=\{(y_{i,1},...,y_{i,n})\}_{i\in\{1,...,k\}}$
of arity $n$ on $\IB$.
\end{proposition}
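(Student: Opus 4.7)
The proof reduces to showing that the multirelation $\rel(S)$ can be canonically extracted from a relational data graph $S$, so my plan is simply to unpack Definition~\ref{def:select-data} and check that every piece of data needed to form $\rel(S)$ is intrinsic to $S$, up to reindexing (which a multiset abstracts away).

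First I would note that, by Definition~\ref{def:select-data}, the blank subjects $\_:l_1,\ldots,\_:l_k$ are pairwise distinct and are precisely the blanks that occur as subjects of triples in $S$. Hence $\{\_:l_1,\ldots,\_:l_k\}$ is canonically the set of blank subjects of $S$, and the enumeration $i\mapsto \_:l_i$ is determined only up to a permutation of $\{1,\ldots,k\}$. Such a permutation has no effect on a multiset, so this ambiguity is harmless.

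Next, for each fixed $i$ I would consider the triples of $S$ having $\_:l_i$ as subject; by definition these are exactly the triples $(\_:l_i,s_j,y_{i,j})$ for $j\in\{1,\ldots,n\}$. Since $s_1,\ldots,s_n$ are pairwise distinct resource identifiers (they form the finite set on which $S$ is relational), for each $j$ there is a unique triple of $S$ with subject $\_:l_i$ and predicate $s_j$, so $y_{i,j}\in \IB$ is uniquely determined from $S$ as its object. This gives a well-defined function $\varphi\colon\{1,\ldots,k\}\to\IB^n$ sending $i$ to $(y_{i,1},\ldots,y_{i,n})$, and $\rel(S)$ is by construction the multiset image of $\varphi$, that is, each $n$-tuple is counted with multiplicity equal to the number of $i$'s it comes from.

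The only subtle point, and what I would flag as the main thing to get right rather than any hard obstacle, is the insistence on a \emph{multiset} rather than a set: distinct blanks $\_:l_i\ne \_:l_{i'}$ may yield equal tuples $\varphi(i)=\varphi(i')$, and this duplication must be retained since SPARQL SELECT answers are bags, not sets, of bindings. Once this is granted, the multirelation depends only on $S$ (and on the chosen set $\{s_1,\ldots,s_n\}$ of predicates), independently of the arbitrary indexing of the blank subjects, and the proposition follows.
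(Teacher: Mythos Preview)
Your proposal is correct and follows the same approach as the paper, which simply states that the result is clear from the definitions of relational data graphs and multirelations. Your version is considerably more detailed than the paper's one-line proof, but the underlying idea---read off the tuples from the triples indexed by the distinct blank subjects---is identical.
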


\begin{proof}
This result is clear from the definitions of relational data graphs 
and multirelations. 
\end{proof}

\setcounterref{theorem}{prop:select-result}
\addtocounter{theorem}{-1}
\begin{proposition} 
Let $(L,S)$ be a basic select query and $G$ a data graph.
The query result of $(L,\gr(S))$ against $G$ is a relational data graph $H$.
More precisely, let $S=\{?s_1,...,?s_n\}$ and let $m_1,...,m_k$ be
the matches from $L$ to $G$, then $H$ is the set of triples
$(\_:l_i,s_j,m_i(?s_j))$ where $i\in \{1,...,k\}$, $j\in \{1,...,n\}$,
and the blanks $\_:l_1,...,\_:l_k$ are pairwise distinct. 
\end{proposition}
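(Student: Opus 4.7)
The plan is to unfold the definitions involved and then verify that the resulting shape matches that of a relational data graph. The proof is essentially a bookkeeping exercise after observing the very simple structure of $\gr(S)$.

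First I would note that by Definition~\ref{def:select-query}, the query graph $\gr(S)$ consists of exactly the $n$ triples $(\_:r, s_j, ?s_j)$ for $j\in\{1,\dots,n\}$, so that $|\gr(S)|_\B = \{\_:r\}$ and $|\gr(S)|_\V = S = \{?s_1,\dots,?s_n\}$. In particular, $|\gr(S)|_\V \subseteq |L|_\V$ by the definition of a basic select query, so $(L,\gr(S))$ is a well-formed basic construct query in the sense of Definition~\ref{def:construct-rule}, and Definition~\ref{def:construct-result} applies.

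Next I would apply Definition~\ref{def:construct-result} to compute the query result. Let $m_1,\dots,m_k$ be the matches from $L$ to $G$. For each $i$, the graph $H_i$ is obtained from $\gr(S)$ by substituting $m_i(?s_j)$ for each variable $?s_j$ and by replacing the single blank $\_:r$ by a fresh blank, which I call $\_:l_i$. Since only one blank occurs in $\gr(S)$, this produces exactly
$$ H_i = \{(\_:l_i, s_j, m_i(?s_j)) : j\in\{1,\dots,n\}\}. $$
The freshness convention in Definition~\ref{def:construct-result} guarantees a \emph{new} blank is introduced for each blank of $R = \gr(S)$ and for each $i\in\{1,\dots,k\}$, so the blanks $\_:l_1,\dots,\_:l_k$ are pairwise distinct and also distinct from all blanks occurring in $G$.

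Finally, the query result is $H = H_1 \cup \cdots \cup H_k$. Because the $\_:l_i$ are pairwise distinct, no two $H_i$'s share any triple (they disagree in the subject slot), and their union is precisely the set of triples
$$ \{(\_:l_i, s_j, m_i(?s_j)) : i\in\{1,\dots,k\},\; j\in\{1,\dots,n\}\}, $$
with the $m_i(?s_j)$ lying in $\IB$ since each $m_i$ is a match and therefore sends variables into $\IB$. This is exactly the format required by Definition~\ref{def:select-data}, so $H$ is a relational data graph on $\{s_1,\dots,s_n\}$, as claimed. There is no real obstacle here; the only point that requires a moment of care is checking that the freshness of the $\_:l_i$ is built into Definition~\ref{def:construct-result} (a new blank for each blank of $R$ \emph{and each} $i$), which is precisely what is needed to make the subjects of $H$ pairwise distinct.
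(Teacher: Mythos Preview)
Your proof is correct and follows essentially the same approach as the paper: unfold the definition of $\gr(S)$, apply Definition~\ref{def:construct-result} to compute each $H_i$, and observe that the union has the required relational shape. The paper's version is terser, omitting the explicit verification that $(L,\gr(S))$ is a basic construct query and that $H$ matches Definition~\ref{def:select-data}, but the argument is the same.
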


\begin{proof}
We have $\gr(S)=\{(\_:r,s_j,?s_j)\}_{j\in \{1,...,n\}}$, so that 
according to Definition~\ref{def:construct-result}  
the query result of $(L,\gr(S))$ against $G$ is $H_1 \cup ...\cup H_k$
where $H_i=\{(\_:l_i,s_j,m_i(?s_j))\}_{j\in \{1,...,n\}}$
and the blanks $\_:l_1,...,\_:l_k$ are pairwise distinct.
\end{proof}

\setcounterref{theorem}{theo:select-sparql}
\addtocounter{theorem}{-1}
\begin{theorem}
Let $L$ be a basic graph pattern of SPARQL with $|L|_\B=\emptyset$ 
and $S=\{?s_1,...,?s_n\}$
a finite set of variables included in $|L|_\V$ and let $G$ be an RDF graph. 
Then the query result of $(L,R)$ against $G$ is the answer
of the SPARQL query ``SELECT $?s_1,...,?s_n$ WHERE \{$L$\}'' over $G$.  
\end{theorem}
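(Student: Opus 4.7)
The plan is to chain together the three preceding results about the select-to-construct translation. By Definition~\ref{def:select-result}, the query result of $(L,S)$ against $G$ is $\rel(H)$, where $H$ is the query result of $(L,\gr(S))$ against $G$. Proposition~\ref{prop:select-result} describes $H$ explicitly as
\[
H \;=\; \{(\_\!:\!l_i,\,s_j,\,m_i(?s_j)) \mid i\in\{1,\dots,k\},\; j\in\{1,\dots,n\}\},
\]
where $m_1,\dots,m_k$ enumerates the matches from $L$ to $G$ and the blanks $\_\!:\!l_i$ are pairwise distinct. This $H$ is by construction a relational data graph on $\{s_1,\dots,s_n\}$, so Proposition~\ref{prop:select-relation} applies and gives
\[
\rel(H) \;=\; \{(m_i(?s_1),\dots,m_i(?s_n))\}_{i\in\{1,\dots,k\}},
\]
viewed as a multiset (multirelation) of $n$-tuples over $\IB$.

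Next I would invoke the SPARQL semantics of SELECT queries from \cite[Section~2]{KKC}: the answer of ``\texttt{SELECT} $?s_1,\dots,?s_n$ \texttt{WHERE} $\{L\}$'' over $G$ is the multiset of projections $(\mu(?s_1),\dots,\mu(?s_n))$, as $\mu$ ranges over the solution mappings from $L$ to $G$. So it suffices to exhibit a multiset-preserving bijection between the matches $m_i:L\to G$ (as used by the \poim-based calculus) and the solution mappings $\mu:|L|_\V\to\IB$ satisfying $\mu(L)\subseteq G$. This is precisely the correspondence already used in the proof of Theorem~\ref{theo:construct-sparql}: since $|L|_\B=\emptyset$, a match $m:L\to G$ (which fixes $\I$) is determined by, and determined by, a map $|L|_\V\to\IB$ whose induced action on triples lands inside $G$, which is exactly a SPARQL solution mapping.

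Putting the three pieces together, $\rel(H)$ is the multiset $\{(m_i(?s_1),\dots,m_i(?s_n))\}_i$, which under the above bijection is identical to $\{(\mu(?s_1),\dots,\mu(?s_n))\}_\mu$, i.e.\ the SPARQL answer. I do not expect a genuine obstacle in this proof: the relational-graph / multirelation interface was set up precisely so that the select case reduces to the construct case. The only point that deserves a careful remark is the multiplicity-preservation of the match/mapping bijection, so that the multiset structure on the SPARQL side matches the one produced by the pairwise distinct blanks $\_\!:\!l_i$ indexing the lines of $H$; this is exactly what Proposition~\ref{prop:select-result} has already packaged.
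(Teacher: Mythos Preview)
Your proposal is correct and follows essentially the same route as the paper: invoke the SPARQL semantics from \cite{KKC}, use the bijection between matches and solution mappings (available because $|L|_\B=\emptyset$), and conclude from Proposition~\ref{prop:select-result}. The paper's proof is terser, collapsing your explicit use of Definition~\ref{def:select-result} and Proposition~\ref{prop:select-relation} into a direct appeal to Proposition~\ref{prop:select-result}, but the argument is the same.
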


\begin{proof}
According to \cite[Section~2]{KKC}, the answer 
of the SPARQL SELECT query is the multiset with elements the restrictions
$\mu|_S$ of the mappings $\mu$ from $L$ to $G$ to the subset $S$ of $|L|_\V$,
each $\mu|_S$ with multiplicity the number of corresponding $\mu$'s.
Since the mappings from $L$ to $G$ correspond bijectively to the matches
from $L$ to $G$, the result follows from Proposition~\ref{prop:select-result}.
\end{proof}


\begin{thebibliography}{10}
\providecommand{\url}[1]{\texttt{#1}}
\providecommand{\urlprefix}{URL }

\bibitem{AJD2015}
Aliyu, S., Junaidu, S., Kana, A.F.D.: A category theoretic model of {RDF}
  ontology. International Journal of Web \& Semantic Technology (IJWesT)
  (2015)

\bibitem{AlqahtaniH18}
Alqahtani, A., Heckel, R.: Model based development of data integration in graph
  databases using triple graph grammars. In Software Technologies: Applications and Foundations. Lecture Notes in Computer Science, vol. 11176, pp. 399--414.
  Springer (2018), \url{https://doi.org/10.1007/978-3-030-04771-9\_29}

\bibitem{AndriesE96}
Andries, M., Engels, G.: A hybrid query language for an extended
  entity-relationship model. J. Vis. Lang. Comput.  7(3),  321--352 (1996),
  \url{https://doi.org/10.1006/jvlc.1996.0017}

\bibitem{AnglesABHRV17}
Angles, R., Arenas, M., Barcel{\'{o}}, P., Hogan, A., Reutter, J.L., Vrgoc, D.:
  Foundations of modern query languages for graph databases. {ACM} Comput.
  Surv.  50(5),  68:1--68:40 (2017), \url{https://doi.org/10.1145/3104031}

\bibitem{BraatzB08}
Braatz, B., Brandt, C.: Graph transformations for the resource description
  framework. {ECEASST}  10 (2008),
  \url{https://doi.org/10.14279/tuj.eceasst.10.158}

\bibitem{Codd_90}
Codd, E.F.: The relational Model for Database Management (Version 2 ed.).
  Addison-Wesley (1990)

\bibitem{CorradiniHHK06}
Corradini, A., Heindel, T., Hermann, F., K{\"{o}}nig, B.: Sesqui-pushout
  rewriting. In: {ICGT} 2006. LNCS, vol. 4178, pp. 30--45. Springer (2006)

\bibitem{CorradiniMREHL97}
Corradini, A., Montanari, U., Rossi, F., Ehrig, H., Heckel, R., L{\"{o}}we, M.:
  Algebraic approaches to graph transformation - part {I:} basic concepts and
  double pushout approach. In: Rozenberg, G. (ed.) Handbook of Graph Grammars.
  pp. 163--246. World Scientific (1997)

\bibitem{DBLP:conf/gg/EhrigHKLRWC97}
Ehrig, H., Heckel, R., Korff, M., L{\"{o}}we, M., Ribeiro, L., Wagner, A.,
  Corradini, A.: Algebraic approaches to graph transformation - part {II:}
  single pushout approach and comparison with double pushout approach. In:
  Rozenberg, G. (ed.) Handbook of Graph Grammars and Computing by Graph
  Transformations, Volume 1: Foundations, pp. 247--312. World Scientific (1997)

\bibitem{FanLMWW10}
Fan, W., Li, J., Ma, S., Wang, H., Wu, Y.: Graph homomorphism revisited for
  graph matching. {PVLDB}  3(1),  1161--1172 (2010),
  \url{http://www.vldb.org/pvldb/vldb2010/pvldb\_vol3/R103.pdf}

\bibitem{KKC} Kaminski, M., Kostylev, E.V., Grau, B.C.: Semantics and
  expressive power of subqueries and aggregates in {SPARQL} 1.1. In:
  Proceedings of the 25th International Conference on World Wide Web,
  {WWW} 2016. pp. 227--238. {ACM} (2016)

\bibitem{KieselSW95}
Kiesel, N., Sch{\"{u}}rr, A., Westfechtel, B.: Gras, a graph-oriented
  (software) engineering database system. Inf. Syst.  20(1),  21--51 (1995),
  \url{https://doi.org/10.1016/0306-4379(95)00002-L}

\bibitem{Kostylev_et_al2015}
Kostylev, E.V., Reutter, J.L., Ugarte, M.: {CONSTRUCT} queries in {SPARQL}. In:
  18th International Conference on Database Theory, {ICDT} 2015, March 23-27,
  2015, Brussels, Belgium. pp. 212--229 (2015)

\bibitem{PerezAG09}
P{\'{e}}rez, J., Arenas, M., Guti{\'{e}}rrez, C.: Semantics and complexity of
  {SPARQL}. {ACM} Trans. Database Syst.  34(3),  16:1--16:45 (2009),
  \url{https://doi.org/10.1145/1567274.1567278}

\bibitem{Robinson2013}
Robinson, I., Webber, J., Eifrem, E.: Graph Databases. O'Reilly Media, Inc.
  (2013)

\bibitem{sparql}
{SPARQL 1.1 Query Language}. W3C Recommendation (march 2013),
  \url{https://www.w3.org/TR/sparql11-query/}

\bibitem{rdf}
{RDF 1.1 Concepts and Abstract Syntax}. W3C Recommendation (February 2014),
  \url{https://www.w3.org/TR/rdf11-concepts/}

\bibitem{rdfs}
{RDF Schema 1.1}. W3C Recommendation (February 2014),
  \url{www.w3.org/TR/2014/REC-rdf-schema-20140225/},
  \url{www.w3.org/TR/2014/REC-rdf-schema-20140225/}

\end{thebibliography}
\end{document}